\newcommand{\todo}[1]{}
\newcommand{\rt}[1]{#1}
\newcommand{\set}[1]{\left\{#1\right\}}
\newcommand{\pr}[1]{\left(#1\right)}
\newcommand{\fpr}[1]{\mathopen{}\left(#1\right)}
\newcommand{\spr}[1]{\left[#1\right]}
\newcommand{\abs}[1]{{\left|#1\right|}}
\newcommand{\enset}[2]{\left\{#1 ,\ldots , #2\right\}}
\newcommand{\enpr}[2]{\pr{#1 ,\ldots , #2}}
\newcommand{\np}{\textbf{NP}}
\newcommand{\ifam}[1]{\mathcal{#1}}
\newcommand{\marvec}[1]{\mathit{mv}\fpr{#1}}
\newcommand{\mean}[1]{\operatorname{E}\spr{#1}}
\newcommand{\define}{\leftarrow}
\DeclareRobustCommand{\dispfunc}[2]{%
  \ensuremath{%
  \ifthenelse{\equal{#2}{}}%
    {\mathit{#1}}%
    {\mathit{#1}\fpr{#2}}}}
\newcommand{\sm}[1]{\dispfunc{r}{#1}}
\newcommand{\supp}[1]{\dispfunc{sp}{#1}}
\newcommand{\pclos}[1]{\dispfunc{\sigma_c}{#1}}
\newcommand{\pfree}[1]{\dispfunc{\sigma_f}{#1}}
\newcommand{\pndi}[1]{\dispfunc{\sigma_n}{#1}}
\newcommand{\pts}[1]{\dispfunc{\sigma_s}{#1}}
\newcommand{\orf}[1]{\dispfunc{o}{#1}}
\newcommand{\diff}[1]{\dispfunc{d}{#1}}
\newcommand{\cl}[1]{\dispfunc{\mathit{cl}}{#1}}
\newcommand{\tid}{\mathit{tid}}
\def\clap#1{\hbox to 0pt{\hss#1\hss}} 
\def\mathclap{\mathpalette\mathclapinternal} 
\def\mathclapinternal#1#2{% 
\clap{$\mathsurround=0pt#1{#2}$}% 
}
\definecolor{yafaxiscolor}{rgb}{0.3, 0.3, 0.3}
\definecolor{yafcolor1}{rgb}{0.4, 0.165, 0.553}
\definecolor{yafcolor2}{rgb}{0.949, 0.482, 0.216}
\definecolor{yafcolor3}{rgb}{0.47, 0.549, 0.306}
\definecolor{yafcolor4}{rgb}{0.925, 0.165, 0.224}
\definecolor{yafcolor5}{rgb}{0.141, 0.345, 0.643}
\definecolor{yafcolor6}{rgb}{0.965, 0.933, 0.267}
\definecolor{yafcolor7}{rgb}{0.627, 0.118, 0.165}
\definecolor{yafcolor8}{rgb}{0.878, 0.475, 0.686}
\newlength{\yafaxispad}
\newlength{\yaftlpad}
\newlength{\yaflabelpad}
\newlength{\yafaxiswidth}
\newlength{\yafticklen}
\def\pgfplots@drawtickgridlines@INSTALLCLIP@onorientedsurf#1{}
\newcommand{\yafdrawaxis}[4]{
	\pgfplotstransformcoordinatex{#1}\let\xmincoord=\pgfmathresult 
	\pgfplotstransformcoordinatex{#2}\let\xmaxcoord=\pgfmathresult 
	\pgfplotstransformcoordinatey{#3}\let\ymincoord=\pgfmathresult 
	\pgfplotstransformcoordinatey{#4}\let\ymaxcoord=\pgfmathresult 
	\pgfsetlinewidth{\yafaxiswidth} 
	\pgfsetcolor{yafaxiscolor}
	\pgfpathmoveto{\pgfpointadd{\pgfpointadd{\pgfplotspointrelaxisxy{0}{0}}{\pgfqpointxy{\xmincoord}{0}}}{\pgfqpoint{-0.5\yafaxiswidth}{\yafaxispad}}}
	\pgfpathlineto{\pgfpointadd{\pgfpointadd{\pgfplotspointrelaxisxy{0}{0}}{\pgfqpointxy{\xmaxcoord}{0}}}{\pgfqpoint{0.5\yafaxiswidth}{\yafaxispad}}}
	\pgfpathmoveto{\pgfpointadd{\pgfpointadd{\pgfplotspointrelaxisxy{0}{0}}{\pgfqpointxy{0}{\ymincoord}}}{\pgfqpoint{\yafaxispad}{-0.5\yafaxiswidth}}}
	\pgfpathlineto{\pgfpointadd{\pgfpointadd{\pgfplotspointrelaxisxy{0}{0}}{\pgfqpointxy{0}{\ymaxcoord}}}{\pgfqpoint{\yafaxispad}{0.5\yafaxiswidth}}}
	\pgfusepath{stroke}
}
\pgfplotsset{axis y line=left, axis x line=bottom,
	tick align=outside,
	tickwidth=\yafticklen,
	clip = false,
    x axis line style= {-, line width = 0pt, color=black!0},
    y axis line style= {-, line width = 0pt, color=black!0},
    x tick style= {line width = \yafaxiswidth, color=yafaxiscolor, yshift = \yafaxispad},
    y tick style= {line width = \yafaxiswidth, color=yafaxiscolor, xshift = \yafaxispad},
    x tick label style = {font=\scriptsize, yshift = \yaftlpad},
    y tick label style = {font=\scriptsize, xshift = \yaftlpad},
    every axis y label/.style = {at = {(ticklabel cs:0.5)}, rotate=90, anchor=center, font=\scriptsize, yshift = -\yaflabelpad},
    every axis x label/.style = {at = {(ticklabel cs:0.5)}, anchor=center, font=\scriptsize, yshift = \yaflabelpad},
    x tick label style = {font=\scriptsize, yshift = 1pt},
    grid = major,
    major grid style  = {dash pattern = on 1pt off 3 pt},
	every axis plot post/.append style= {line width=\yafaxiswidth} ,
	legend cell align = left,
	legend style = {inner sep = 1pt, cells = {font=\scriptsize}},
	legend image code/.code={%
		\draw[mark repeat=2,mark phase=2,#1] 
		plot coordinates { (0cm,0cm) (0.15cm,0cm) (0.3cm,0cm) };% 
	} 
}
\title{Finding Robust Itemsets Under Subsampling}
\author{Submitted for Double-Blind review}
\author{NIKOLAJ TATTI
\affil{HIIT, Aalto University, KU Leuven, University of Antwerp}
FABIAN MOERCHEN
\affil{Amazon} 
TOON CALDERS
\affil{Universit\'{e} Libre de Bruxell\'{e}s, Eindhoven University of Technology}}
\begin{abstract}
Mining frequent patterns is plagued by the problem of pattern explosion
making pattern reduction techniques a key challenge in pattern mining.  In this
paper we propose a novel theoretical framework for pattern reduction. We do this
by measuring the robustness of a property of an itemset such as closedness or non-derivability.
The robustness of a property is the probability that this property holds on
random subsets of the original data.
We study four properties: if an itemset is closed, free,
non-derivable or totally shattered, and demonstrate how to compute
the robustness analytically without actually sampling the data.
Our concept of robustness has many advantages: Unlike statistical approaches
for reducing patterns, we do not assume a null hypothesis or any noise model and in contrast to 
noise tolerant or approximate patterns, the robust patterns for a given property are always a 
subset of the patterns with this property.
If the underlying property is monotonic, then the measure is also monotonic,
allowing us to efficiently mine robust itemsets. We further derive a
parameter-free technique for ranking itemsets that can be used for
top-$k$ approaches. Our experiments demonstrate that we can successfully use the
robustness measure to reduce the number of patterns and that ranking yields interesting
itemsets.
\end{abstract}
\keywords{pattern reduction, robust itemsets, closed itemsets, free itemsets, non-derivable itemsets, totally shattered itemsets}
\begin{document}
\begin{bottomstuff}
The research described in this paper builds upon and extends the work presented
in the IEEE International Conference on Data Mining (ICDM IEEE
2011)~\cite{tatti:11:finding}.

Part of this work done while Nikolaj Tatti was employed by ADReM Research
Group, Department of Mathematics and Computer Science, University of Antwerp
and DTAI group, Department of Computer Science, Katholieke Universiteit Leuven,
Leuven, Belgium. In addition, Fabian Moerchen was employed by Siemens Corporate
Research, USA and Toon Calders was employed by Faculty of Mathematics and Computer
Science, Eindhoven University of Technology, The Netherlands.  Nikolaj Tatti
was partly supported by a Post-Doctoral Fellowship of the Research Foundation
--- Flanders (FWO).

Authors' address: N.\ Tatti, Helsinki Insitute for Information Technology, Department of Information and Computer Science, Aalto University, Finland.
F.\ Moerchen, Amazon, Seattle, Washington, USA.
T.\ Calders, WIT group, Computer \& Decision Engineering department, Universit\'{e} Libre de Bruxell\'{e}s, Belgium
\end{bottomstuff}
\maketitle

\section{Introduction}\label{sec:introduction}
Frequent itemset mining was first introduced in the context of market basket
analysis~\cite{agrawal93mining}. This problem can be defined as follows: a transaction is a subset of
a given set of items $A$, and a transaction database is a set of such transactions.
A subset $X$ of $A$ is a \emph{frequent itemset} in a transaction database if the number 
of transactions containing all items of $X$ exceeds a given threshold. Since its proposal, 
frequent itemset mining has been used to address many data
mining problems such as association rule
generation~\cite{hipp00algorithms}, clustering~\cite{wang99clustering},
classification~\cite{cheng07discriminative}, temporal data
mining~\cite{moerchen10efficient} and outlier detection~\cite{vreeken11odd}.
The mining of itemsets is a core step in these methods that often dominates the
overall complexity of the problem. The number of frequent itemsets, however, can be
extremely large even for moderately sized datasets; in worst case,
the number of frequent itemsets is exponential in $|A|$. 
This explosion severely complicates 
manual analysis or further automated processing steps.

Therefore, researchers have proposed many solutions to reduce the number of patterns
depending on the context in which the patterns are to be used or the process
in which the data was generated. Example of reduced pattern collections include:
 the closed itemsets~\cite{pasquier99discovering} to avoid redundant association rules,
constrained itemsets~\cite{pei01mining} to incorporate prior knowledge,
condensed representations~\cite{calders06survey} to answer frequency queries
with limited memory, margin-closed itemsets~\cite{moerchen10efficient} for
exploratory analysis, and surprising itemsets
\cite{brin:97:beyond,tatti08maximum} or top-k patterns~\cite{geerts04tiling}
for itemset ranking.

Many of reduction techniques have a drawback of being fragile.
For example, a closed itemset can be defined as an itemset that 
can be written as the intersection of transactions; that is, all
of its supersets are contained in strictly less transactions.
Given a non-closed itemset $X$, adding a single transaction to
the dataset containing only $X$ will make $X$ closed.
In this paper we introduce a novel theoretical framework that uses this
drawback to its advantage. Given a property of an itemset (closedness or
non-derivability, for example) we can measure the \emph{robustness} of this
property. A property of $X$ is robust if it holds for many datasets subsampled
from the original data. We demonstrate that we can compute this measure
analytically for several important classes of itemsets:
closed~\cite{pasquier99discovering}, free~\cite{boulicaut03freesets},
non-derivable ~\cite{calders06nonderivable}, and totally shattered
itemsets~\cite{mielikainen05databases}. Computing robust itemsets under
subsampling turns out to be practical for free, non-derivable, and totally
shattered itemsets. Unfortunately, for closed itemsets the test for robustness is
prohibitively expensive.

A possible drawback of our approach is that it depends on a parameter $\alpha$,
the probability of including a transaction in a subsample. In addition to
providing reasonable guidelines to choose $\alpha$ we also introduce a technique
making us independent of $\alpha$. We show that there is a neighborhood near 1 in
which the ranking of itemsets does not depend on $\alpha$. We further
demonstrate how we can compute this ranking without actually discovering the
exact neighborhood or computing the measure for the itemsets. We give exact solutions
for free, non-derivable, and totally shattered itemsets and provide practical heuristics
for closed itemsets.

In the remainder of this paper we describe related work and motivate our
approach in Section~\ref{sec:related}. Itemsets robust under subsampling and
algorithms to find them are described in Section~\ref{sec:theory}.
We discuss ordering itemsets based on large values of $\alpha$ in Sections~\ref{sec:order}--\ref{sec:orderpractice}.
Section~\ref{sec:experiments} demonstrates how the subsampling approach can
reduce the number of reported itemsets significantly. The results are discussed
in comparison with approximate itemsets in Sections~\ref{sec:discussion}.

\section{Related work and motivation}\label{sec:related}
The design goal of condensed representations~\cite{calders06survey} of frequent itemsets is to be able to answer all possible frequency queries. For example, non-derivable itemsets~\cite{calders06nonderivable} exclude any itemset whose support can be derived exactly from the supports of its subsets using logical rules. Other examples of such complete collections are the closed and the free itemsets which are based upon the notion of equivalence of itemsets. Two itemsets are equivalent if they are supported by exactly the same set of transactions. This notion of equivalence divides the frequent itemsets into equivalence classes. The unique maximal element of each equivalence class is a closed itemset~\cite{pasquier99discovering}. No more items can be added to this set without losing some supporting transactions. The not necessarily unique minimal elements of the equivalence class are free itemsets~\cite{boulicaut03freesets} or generators. No items can be taken out without adding transactions to their support set. Complete condensed representations such as those based upon the non-derivable, closed, and free sets allow the derivation of the support of all frequent itemsets. Such representations are useful because they are more compact, yet they still support further mining tasks such as the generation of association rules where the frequencies of all subsets of an itemset are needed to determine the confidence of all possible rules. 

Nevertheless, even the number of closed and free itemsets can still be very large when the minimum support threshold is low. As for other tasks knowing the frequency of all frequent 
itemsets may be less useful because there is a large redundancy in the set of frequent itemsets.  By using approximate methods the number of patterns can be further reduced; for instance by
clustering itemsets representing similar sets of transactions~\cite{xin05mining}, enforcing itemsets to have a minimum margin of difference in support~\cite{moerchen10efficient},
or ranking itemsets by significance~\cite{brin:97:beyond,gallo07mini,webb07discovering,tatti08maximum}.

The above approaches have in common that the complete dataset is considered and
no assumption on potential noise is made. In fault tolerant approaches the
strict definition of support, requiring all items of an itemset to be present
in a transaction is relaxed, see~\cite{gupta08quantitative,calders04mining,uno07efficient,luccese10mining}.
Rather, it is assumed that items can present or absent at random in the transactions.
These approaches can reveal important structures in noisy data that might
otherwise get lost in a huge amount of fragmented patterns. One needs to be
aware though that they report approximate support values and possibly list
itemsets that are not observed as such in the collection at all or with much
smaller support. Also the design goal is not to reduce the number of reported patterns.
Only~\cite{cheng06acclose} considers the3 combination of the two approaches and studies closedness in combination with fault
tolerance.

Furthermore, a third class of techniques considers a statistical null hypothesis and ranks patterns according to 
how much their support deviates from their expected support under the null model~\cite{brin:97:beyond,gallo07mini,webb07discovering,tatti08maximum}.
Unlike these approaches, we do not assume a statistical null hypothesis. We also do not assume any noise
model, such as flipping the values of a matrix independently. Instead our goal
is to study robustness of a given property based on subsampling transactions.

\rt{
The idea of using random databases to assess data mining results has been
proposed in~\cite{gionis:07:assessing,hanhijarvi:09:tell,debie:11:dami}.  The
goal is to first infer some (simple) background information from a dataset, and
then consider all datasets that have the same statistics. A data mining result is then deemed interesting only if it appears in a small number of these datasets.
Interestingly enough, this is the opposite of what we are considering to be important;
that is, we want to find itemsets that satisfy the predicate in many random
subsets of the data. This philosophical difference can be explained by completely
orthogonal randomizations.  The authors in the aforementioned papers sample
random datasets from simple statistics, that is, they ignore on purpose complex
interactions between items, and try to explain mining results with simple
information. Our goal is not to explain results but rather to test whether our
results are robust by testing how data mining results change if we remove
transactions. 

An idea using random datasets to compute the smoothness of results has been
proposed in~\cite{DBLP:conf/pkdd/MisraGT12}. The idea is to measure how stable
the results are by sampling random datasets from a distribution that favors datasets
close to the original one, and computing the average deviation from the original result in the sampled datasets. Finally, stability of rankings has been studied
in the context of networks, see for example~\cite{ghoshai:11:ranking}. 
}

\section{Robust itemsets}\label{sec:theory}
In this section we define the robustness and describe how to compute it efficiently.
\subsection{Notation and definitions}\label{sec:prel}
We begin by reviewing the preliminaries and introducing the notations used in
the paper.

A \emph{binary dataset} $D$ is a set of transactions, tuples $(\tid,
t)$ consisting of a transaction id and a binary vector $t \in \set{0, 1}^K$ of
length $K$.  The $i$th element of a transaction corresponds to an \emph{item}
$a_i$; a $1$ in the $i$th position indicates that the transaction contains the item, a $0$ that it does not. 
We denote the collection of all items by $A= \enset{a_1}{a_K}$.

If $S$ is a set of binary vectors of length $K$, we will write $D \cap S$ to denote
$\set{(\tid, t) \in D \mid t \in S}$.

An itemset $X$ is a subset of $A$. Given a binary vector $t$ of length $K$ and an
itemset $X$, we define $t_X$ to be the binary vector of length $\abs{X}$ obtained by keeping
only the positions corresponding to the items in $X$.

Given an itemset $X = \enpr{x_1}{x_N}$ and a binary vector $v$
of length $N$, we define the \emph{support}
\[
	\supp{X = v; D} = \abs{\set{(\tid, t) \in D \mid t_X = v}}
\]
to be the number of transactions in $D$, where the items in $X$ obtain
the values given in $v$. We often omit $D$ from the notation, when it is clear from the context.
In addition, if $v$ contains only 1s, we simply write $\supp{X}$.
Note that $\supp{X}$ coincides with the traditional definition of a support for
$X$. Discovering frequent itemsets, that is, itemsets whose support exceeds
some given threshold is a well-studied problem.

\begin{example}
Throughout the paper we will use the following dataset $D$ as a running example:
\[
D = \spr{
\begin{smallmatrix}
1: & 0 & 0 & 0 & 0 & 1\\
2: & 0 & 1 & 0 & 1 & 1\\
3: & 1 & 1 & 1 & 1 & 1\\
4: & 0 & 1 & 0 & 1 & 1\\
5: & 1 & 1 & 1 & 1 & 1\\
6: & 1 & 0 & 0 & 0 & 0\\
\end{smallmatrix}}
\]
$D$ contains $5$ items, $a$, $b$, $c$, $d$, and $e$, and $6$ transactions.
For this dataset we have $\supp{ab} = 2$, and $\supp{ab = [1, 0]} = 1$.
\end{example}

We say that a function $f$ mapping an itemset $X$ to a real number $f(X)$ is
\emph{monotonically decreasing} if for each $Y \subseteq X$ we have $f(Y) \geq
f(X)$. \rt{A classic pattern mining task is to discover all itemsets of having $f(X) \geq \rho$
given a threshold $\rho$ and a function $f$ mapping an itemset to a real number. If this function
turns out to be monotonically decreasing, then we can use efficient pattern mining
algorithms to discover \emph{all} patterns satisfying this criterion.}

\rt{
Our next step is to define 4 different properties for itemsets. These are
closed, free, non-derivable, and totally shattered itemsets. The goal of this work is to study how to introduce a measure of robustness for these properties.}

\paragraph{Closed Itemsets}
An itemset $X$ is said to be \emph{closed}, if there is no $Y \supsetneq X$
such that $\supp{X} = \supp{Y}$, i.e., $X$ is maximal w.r.t. set inclusion
among the itemsets having the same support. We define a predicate
\[
	\pclos{X; D} =
	\begin{cases}
		1 & \text{if } X \text{ is closed in } D,\\
		0 & \text{otherwise }\quad .
	\end{cases}
\]
Every closed itemset corresponds to the intersection of a subset of transactions in $D$ and vice versa.

\paragraph{Free Itemsets}
An itemset $X$ said to be \emph{free} if there is no $Y \subsetneq X$
such that $\supp{X} = \supp{Y}$, i.e., free itemsets are minimal
among the itemsets having the same support. We define a predicate
\[
	\pfree{X; D} =
	\begin{cases}
		1 & \text{if } X \text{ is free in } D,\\
		0 & \text{otherwise }\quad .
	\end{cases}
\]

A vital property of free itemsets is that they constitute a downward closed
collection allowing efficient mining with an Apriori-style algorithm (see
Theorem~1 in~\cite{boulicaut00approximationof}). That is, if an itemset $X$ 
is free, all its subsets are free as well.

\begin{example}
The closed itemsets in our running example are $a$, $e$, $bde$, and $abcde$.
On the other hand, the itemsets $\emptyset$, $a$, $b$, $c$, $d$, $e$, $ab$, $ad$, and $ae$ are free.
\end{example}

\paragraph{Non-derivable Itemsets}
An itemset $X$ is said to be \emph{derivable}, if we can derive its support from
the supports of the proper subsets of $X$, otherwise an itemset is called
\emph{non-derivable}. We define a predicate
\[
	\pndi{X; D} =
	\begin{cases}
		1 & \text{if } X \text{ is non-derivable in } D,\\
		0 & \text{otherwise }\quad .
	\end{cases}
\]
Non-derivable itemsets form a downward closed collection (Corollary~3.4~in~\cite{calders06nonderivable}), 
hence we can mine them using an Apriori-style approach.

We say that an itemset $X$ is \emph{totally shattered} if $\supp{X = v} > 0$
for all possible binary vectors $v$.  In other words, every possible
combination of values for $X$ occur in $D$.  Again, we define a predicate
\[
	\pts{X; D} =
	\begin{cases}
		1 & \text{if } X \text{ is totally shattered in } D,\\
		0 & \text{otherwise }\quad .
	\end{cases}
\]
Totally shattered itemsets are related to the
VC-dimension~\cite{mielikainen05databases},
and we can show that a totally shattered itemset is always free and non-derivable (but not the other way around).

\begin{example}
Itemset $ab$ in the running example is totally shattered. Itemset $ac$
is non-derivable but not totally shattered because $\supp{ac = [0, 1]} = 0$.
\end{example}

It is easy to see from the definition that totally shattered itemsets constitute a downward closed collection,
hence they are easy to mine using an Apriori-style approach.

\subsection{Measuring robustness}\label{sec:method}
In this section we propose a measure of robustness for itemsets with a predicate $\sigma$.
\rt{The idea is to sample random subsets from a given dataset and measure how often the predicate $\sigma(X)$
holds in a random dataset.}
Intuitively we consider an itemset robust if the predicate is true for many subsets of the database.

In order to define the measure formally, we first define a probability for a subset of $D$.

\begin{definition}
Given a binary dataset $D$, and a real number $\alpha$, $0 \leq \alpha \leq 1$, we
define a random dataset $D_\alpha$ obtained from $D$ by keeping each transaction
with probability $\alpha$, or otherwise discarding it. More formally, let
$S$ be a subset of $D$. The probability of $D_\alpha = S$ is equal to
\begin{equation}
\label{eq:prob}
	p(D_\alpha = S) = \alpha^\abs{S}(1 - \alpha)^{\abs{D} - \abs{S}} \quad .
\end{equation}
\end{definition}

\rt{We can now define the robustness of an itemset $X$ as the probability of $\sigma(X)$
being true in a random dataset.}

\begin{definition}
Given a binary dataset $D$, a real number $\alpha$, and an itemset predicate
$\sigma$, we define the robustness to be the probability that $\sigma(X;
D_\alpha) = 1$, that is,
\[
	\sm{X; \sigma, D, \alpha} = p(\sigma(X; D_\alpha) = 1) = \sum_{\mathclap{\sigma(X; S) = 1}} p(D_\alpha = S)\quad.
\]
For notational clarity, we will omit $D$ and $\alpha$ when they are clear
from the context.
\end{definition}

\begin{example}
\label{ex:toyts}
Consider itemset $ab$ in our running example. Let $\alpha = 1/3$.
Note that $\supp{ab = [0, 0]} = \supp{ab = [1, 0]} = 1$ and
$\supp{ab = [0, 1]} = \supp{ab = [1, 1]} = 2$.  In order for
$ab$ to still be totally shattered on a subset each of these supports needs to stay greater
than zero. The probability of this event is equal to
\[
	1/3 \times 1/3 \times (1 - 2/3\times 2/3) \times (1 - 2/3\times 2/3) = 25/729,
\]
because for the first two cases we need to sample the single transaction upholding the property and for the other two cases we need to make sure we do not skip both transactions we need to uphold the property.
\end{example}

Our main goal is to mine itemsets for which the robustness measure
exceed some given threshold $\rho$, that is, find all itemsets for which $\sm{X; \sigma, D, \alpha} \geq \rho$.

In order to mine all significant patterns we need to show that the robustness
measure is monotonically decreasing. This is indeed the case if the underlying
predicate is monotonically decreasing. 

\begin{proposition}
Let $\sigma$ be a monotonically decreasing predicate. Then $\sm{X; \sigma, D,
\alpha}$ is also monotonically decreasing.
\end{proposition}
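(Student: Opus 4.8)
The plan is to fix an arbitrary itemset $X$ together with a subset $Y \subseteq X$ and to prove the single inequality $\sm{Y; \sigma, D, \alpha} \geq \sm{X; \sigma, D, \alpha}$, which is precisely what it means for $\sm{\cdot}$ to be monotonically decreasing. The crucial structural observation is that the subsampling distribution $p(D_\alpha = S)$ from Equation~\eqref{eq:prob} lives on a single sample space, namely the collection of subsets $S \subseteq D$, and does not depend on which itemset we are testing. Consequently the robustness values for $X$ and for $Y$ are sums against the \emph{same} probability measure, so the whole argument reduces to comparing the sets of outcomes on which the two predicates hold.

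First I would unpack the hypothesis that the predicate $\sigma$ is monotonically decreasing. Reading the definition of monotonically decreasing with $\sigma$ in the role of $f$, and applying it to each fixed realization $S \subseteq D$ of the random dataset, the hypothesis says that whenever $\sigma(X; S) = 1$ we must also have $\sigma(Y; S) = 1$ for every $Y \subseteq X$ (since $\sigma$ takes values in $\set{0,1}$, the inequality $\sigma(Y; S) \geq \sigma(X; S)$ forces $\sigma(Y; S) = 1$ once $\sigma(X; S) = 1$). In words: for a fixed subsampled dataset, the property holding on the larger itemset forces it to hold on every subset.

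Next I would translate this pointwise implication into an inclusion of events, $\set{S \subseteq D \mid \sigma(X; S) = 1} \subseteq \set{S \subseteq D \mid \sigma(Y; S) = 1}$. By the definition of $\sm{\cdot}$, each robustness value is exactly the sum of the nonnegative weights $p(D_\alpha = S)$ over its corresponding event. Summing nonnegative terms over the smaller index set can only produce a value no larger than summing over the larger index set, so $\sm{X} \leq \sm{Y}$, which is the claim.

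I do not expect a genuine obstacle: the heart of the matter is just monotonicity of probability under event inclusion. The only point that requires care is to keep the sampled dataset fixed when invoking the predicate's monotonicity, i.e.\ to apply the monotone property of $\sigma$ separately on each realization $S$ rather than attempting to compare different samples, and to note explicitly that the identical sampling measure is shared by $X$ and $Y$, since that is what makes the two defining sums directly comparable.
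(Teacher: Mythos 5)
Your proposal is correct and is essentially the paper's own argument: the paper's proof is exactly the observation that $\set{S \mid \sigma(X;S)=1} \subseteq \set{S \mid \sigma(Y;S)=1}$ (by pointwise monotonicity of $\sigma$ on each fixed subsample $S$), so the sum of the nonnegative weights $p(D_\alpha = S)$ over the smaller event is bounded by the sum over the larger one. You merely spell out the event-inclusion step that the paper leaves implicit.
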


\begin{proof}
Let $Y$ and $X$ be itemsets such that $Y \subset X$. Then
\[
	\sm{X; \sigma, D, \alpha} = \sum_{\mathclap{\sigma(X; S) = 1}} p(D_\alpha = S) \leq \sum_{\mathclap{\sigma(Y; S) = 1}} p(D_\alpha = S) = \sm{Y; \sigma, D, \alpha},
\]
which proves the proposition.
\end{proof}

\rt{
As pointed out in Section~\ref{sec:prel}, predicates for free, non-derivable,
and totally shattered itemsets are monotonically decreasing. However, the predicate for
closedness is not monotonically decreasing.}

\rt{We will finish this section by considering how robustness depends on $\alpha$.}
If we set $\alpha = 1$, then $\sm{X; \sigma, D, \alpha} = \sigma(X; D)$.
Naturally, we expect that when we lower $\alpha$, the robustness would decrease.
This holds for predicates that satisfy a specific property.

\begin{definition}
We say that a predicate $\sigma$ is \emph{monotonic w.r.t. deletion} if for each
itemset $X$, each dataset $D$, and each transaction $t \in D$ it holds that
if $\sigma(X; D) = 0$, then $\sigma(X; D - t) = 0$.
\end{definition}

\begin{proposition}
Let $\sigma$ be a predicate monotonic w.r.t. deletion. Then
	$\sm{X; \sigma, D, \alpha} \leq \sm{X; \sigma, D, \beta}$,
for $\alpha \leq \beta$.
\end{proposition}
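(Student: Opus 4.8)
The plan is to reinterpret the robustness $\sm{X; \sigma, D, \alpha}$ as the probability that the random dataset $D_\alpha$ lands in the family $\mathcal{U} = \set{S \subseteq D \mid \sigma(X; S) = 1}$ of ``good'' subsets, and then to show that this probability can only grow as the inclusion probability increases. The whole argument rests on two observations: first, that monotonicity w.r.t.\ deletion forces $\mathcal{U}$ to be upward closed in the subset lattice of $D$; and second, that $D_\alpha$ and $D_\beta$ can be realized on a common probability space so that $D_\alpha \subseteq D_\beta$ holds pointwise.

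First I would establish the upward closure of $\mathcal{U}$. The definition of monotonicity w.r.t.\ deletion is, in contrapositive form, the statement that adding a single transaction can never turn the predicate from true to false: if $\sigma(X; D' - t) = 1$ for a transaction $t \in D'$, then $\sigma(X; D') = 1$. Given $S \subseteq T \subseteq D$ with $\sigma(X; S) = 1$, I would order the transactions of $T \setminus S$ arbitrarily and add them to $S$ one at a time, applying the contrapositive at each step; a trivial induction on $\abs{T \setminus S}$ then yields $\sigma(X; T) = 1$. Hence $S \in \mathcal{U}$ and $S \subseteq T$ imply $T \in \mathcal{U}$.

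Next I would build a monotone coupling of the two random datasets. For each transaction $t \in D$, draw an independent variable $U_t$ uniform on $[0,1]$, and put $t$ into $D_\alpha$ exactly when $U_t \le \alpha$ and into $D_\beta$ exactly when $U_t \le \beta$. Each transaction is then kept independently with probability $\alpha$ (resp.\ $\beta$), so the marginal laws agree with \eqref{eq:prob}; and because $\alpha \le \beta$, every $t$ selected for $D_\alpha$ is also selected for $D_\beta$, so $D_\alpha \subseteq D_\beta$ with probability $1$. This single construction covers all cases $0 \le \alpha \le \beta \le 1$ with no degenerate exceptions.

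Finally I would combine the two facts. On the coupled space, whenever $\sigma(X; D_\alpha) = 1$ we have $D_\alpha \in \mathcal{U}$, and since $D_\alpha \subseteq D_\beta$ the upward closure gives $D_\beta \in \mathcal{U}$, i.e.\ $\sigma(X; D_\beta) = 1$. Thus the event $\set{\sigma(X; D_\alpha) = 1}$ is contained in the event $\set{\sigma(X; D_\beta) = 1}$, and taking probabilities yields $\sm{X; \sigma, D, \alpha} \le \sm{X; \sigma, D, \beta}$. The only place that needs genuine care --- the main obstacle, such as it is --- is checking that the threshold coupling simultaneously reproduces both marginal distributions while guaranteeing the inclusion $D_\alpha \subseteq D_\beta$; once that is in hand the monotonicity of $\sigma$ does all the remaining work. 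A direct term-by-term comparison of the sums defining the two robustness values is tempting but fails, since individual summands $p(D_\alpha = S) - p(D_\beta = S)$ need not have a fixed sign, which is exactly why the coupling is the right device.
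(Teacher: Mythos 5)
Your proof is correct, but it takes a genuinely different route from the paper's. The paper argues by induction on $\abs{D}$: it fixes a transaction $t$, introduces the auxiliary predicate $\sigma_t(X;S)=\sigma(X;S\cup\set{t})$, notes that monotonicity w.r.t.\ deletion gives $\sigma(X;S)\leq\sigma_t(X;S)$ pointwise (hence the corresponding robustness values are ordered), and then uses the decomposition $\sm{X;\sigma,D,\alpha}=(1-\alpha)\sm{X;\sigma,D',\alpha}+\alpha\sm{X;\sigma_t,D',\alpha}$ with $D'=D-\set{t}$, applying the induction hypothesis to both terms. You instead make explicit the structural fact that the paper only uses implicitly: monotonicity w.r.t.\ deletion is exactly upward closure of the family $\mathcal{U}$ of subsets satisfying the predicate, and your step-by-step contrapositive argument for this is sound. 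You then realize $D_\alpha$ and $D_\beta$ on a common probability space via independent uniform thresholds $U_t$, so that $D_\alpha\subseteq D_\beta$ almost surely, the marginals match Eq.~\ref{eq:prob}, and the event $\set{\sigma(X;D_\alpha)=1}$ is contained in $\set{\sigma(X;D_\beta)=1}$; taking probabilities finishes the proof. Both arguments are valid. The coupling is arguably more transparent and dispenses with the induction entirely, at the (small) cost of constructing a joint probability space; the paper's version is purely algebraic and self-contained. Your closing observation that a naive term-by-term comparison of the two defining sums fails because $p(D_\alpha=S)-p(D_\beta=S)$ has no fixed sign is accurate and correctly identifies why some such device is needed.
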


\begin{proof}
We will prove the proposition by induction over $\abs{D}$. The proposition
holds trivially for $\abs{D} = 0$. Assume that the theorem holds for $\abs{D} = N$
and let $D$ be a dataset with $\abs{D} = N + 1$.

Fix $t \in D$ and define a new predicate
	$\sigma_t(X; S) = \sigma(X; S \cup \set{t})$,
where $S$ is a dataset. $\sigma_t$ is monotonic w.r.t deletion. Otherwise,
if there is a dataset $S$, a transaction $u \in S$ an itemset $Y$ violating
the monotonicity, then $S \cup \set{t}$, the same transaction $u$, and the itemset $Y$
will violate the monotonicity for $\sigma$.

Moreover, since $\sigma$ is monotonic w.r.t deletion, it holds that
$\sigma(X; S) \leq \sigma_t(X; S)$. This in turns implies that
\begin{equation}
\label{eq:augpred}
	\sm{X; \sigma, S, \alpha} \leq \sm{X; \sigma_t, S, \alpha}\quad.
\end{equation}
Let us write $D' = D - \set{t}$. Then we have,
\[
\begin{split}
	\sm{X; \sigma, D, \alpha}  &= (1 - \alpha)\sm{X; \sigma, D', \alpha} + \alpha\sm{X; \sigma_t, D', \alpha} \\
	&\leq (1 - \beta)\sm{X; \sigma, D', \alpha} + \beta\sm{X; \sigma_t, D', \alpha} \\
	&\leq (1 - \beta)\sm{X; \sigma, D', \beta} + \beta\sm{X; \sigma_t, D', \beta} \\
	&= \sm{X; \sigma, D, \beta},  \\
\end{split}
\]
where the first inequality holds because of Equation~\ref{eq:augpred} and the second
inequality holds because of induction assumption.
This proves the proposition.
\end{proof}

It turns out that all the predicates we considered in Section~\ref{sec:prel} are
monotonic w.r.t. deletion.

\begin{proposition}
\label{prop:monotonedelete}
Predicates $\pclos{}$, $\pfree{}$, $\pndi{}$, and $\pts{}$ are monotonic w.r.t. deletion.
\end{proposition}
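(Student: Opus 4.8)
The plan is to check the defining implication of monotonicity w.r.t.\ deletion, namely $\pndi{}$-style statements $\sigma(X; D) = 0 \Rightarrow \sigma(X; D - t) = 0$, for each of the four predicates in turn. The common tool is the observation that deleting a single transaction $t$ can never increase a count: for every itemset $Y$ we have $\supp{Y; D - t} \leq \supp{Y; D}$, and for every pattern $v$ we have $\supp{X = v; D-t} \leq \supp{X = v; D}$, with a decrease of exactly one precisely when $t$ matches the pattern in question. Consequently, any \emph{witness} of the failure of a property that is phrased as an equality of supports or as a vanishing count will survive the deletion, and this is what I would exploit in every case.

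For $\pclos{}$ and $\pfree{}$ I would use a cover-set witness. If $X$ is not closed in $D$, there is $Y \supsetneq X$ with $\supp{X; D} = \supp{Y; D}$; since $Y \supseteq X$ forces every transaction supporting $Y$ to support $X$, equality of supports means the two sets of supporting transactions coincide. Deleting $t$ removes it from both of these sets simultaneously (or from neither), so the supports remain equal and $X$ is still not closed in $D - t$. The argument for $\pfree{}$ is symmetric: a witness $Y \subsetneq X$ with $\supp{Y; D} = \supp{X; D}$ again forces the supporting sets to coincide, and this equality is preserved by the deletion.

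The predicate $\pts{}$ is the easiest. If $X$ is not totally shattered in $D$, then $\supp{X = v; D} = 0$ for some pattern $v$; by the monotonicity of counts this cell stays at $0$ in $D - t$, so $X$ is not totally shattered there either.

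The genuinely delicate case is $\pndi{}$, and I expect it to be the main obstacle. Here I would recall the inclusion--exclusion theory of derivability: writing each cell count $\supp{X = v}$ as an alternating sum of supports of subsets $Z \subseteq X$ and isolating the $Z = X$ term, one obtains a lower and an upper deduction bound $L \leq \supp{X} \leq U$ whose endpoints depend only on the supports of the \emph{proper} subsets of $X$, and $X$ is derivable (that is $\pndi{X; D} = 0$) exactly when $U = L$. The key identity I would establish is that the width $U - L$ equals the sum of the smallest cell count over patterns with an odd number of absent items and the smallest cell count over patterns with an even number of absent items. Because every cell count can only decrease when $t$ is deleted, each of these two minima can only decrease, so the width $U - L$ can only decrease; as the width is always nonnegative, $U - L = 0$ in $D$ forces $U - L = 0$ in $D - t$, i.e.\ derivability is preserved. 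The part that needs care --- and where I would spend most of the effort --- is proving the width identity cleanly and accounting for the low-cardinality and trivial-bound degeneracies (for instance when one parity class contributes no proper subset), so that the monotonicity of the width is fully justified.
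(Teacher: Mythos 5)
Your proposal is correct and follows essentially the same route as the paper: each predicate's failure is reduced to a witness condition that trivially persists under deletion of a transaction (a zero cell for $\pts{}$, a coinciding support set --- equivalently the absence of a separating transaction --- for $\pclos{}$ and $\pfree{}$, and for $\pndi{}$ the identity $U-L=\min_{O}\supp{X=v}+\min_{E}\supp{X=w}$, which is exactly the content of the paper's Lemma~\ref{lem:ndits} characterizing derivability by two zero cells of opposite parity). The only cosmetic difference is that you phrase the non-derivable case as monotonicity of the nonnegative width rather than persistence of the two zero cells, but both rest on the same characterization from the non-derivable-itemset literature.
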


\rt{
In order to prove the case for non-derivable itemsets we will need the following technical lemma. 
We will also use this lemma later on.}

\begin{lemma}
\label{lem:ndits}
An itemset $X$ is derivable if and only if there are two
vectors $v$ and $w$ of length $\abs{X}$ with $v$ having odd number of $0$s and
$w$ having even number of $0$s such that $\supp{X = v} = \supp{X = w} = 0$.
\end{lemma}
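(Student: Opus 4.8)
The plan is to exploit the inclusion--exclusion (Möbius) correspondence between the value-supports $\supp{X = v}$ and the monotone supports $\supp{Z}$ for $Z \subseteq X$, and then to read the deduction bounds of Calders and Goethals directly off the non-negativity of the value-supports. First I would fix notation: for $Y \subseteq X$ let $v_Y$ be the binary vector of length $\abs{X}$ that is $1$ on the items of $Y$ and $0$ elsewhere, and set $c_Y = \supp{X = v_Y}$. The number of $0$s in $v_Y$ is exactly $\abs{X \setminus Y}$, so the parity conditions in the statement translate into parity conditions on $\abs{X \setminus Y}$. Any transaction containing all of $Z$ meets $X$ in some $W$ with $Z \subseteq W \subseteq X$, so $\supp{Z} = \sum_{Z \subseteq W \subseteq X} c_W$; Möbius inversion over the subset lattice then gives
\[
	c_Y = \sum_{Y \subseteq Z \subseteq X} (-1)^{\abs{Z \setminus Y}}\supp{Z},
\]
where $c_Y \geq 0$ because it is a count, and $c_Y = 0$ precisely when $\supp{X = v_Y} = 0$.

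Next I would isolate the top term $Z = X$ and write $c_Y = (-1)^{\abs{X \setminus Y}}\supp{X} + R_Y$, where $R_Y$ depends only on the proper subsets $\supp{Z}$, $Z \subsetneq X$. Solving gives $\supp{X} = (-1)^{\abs{X \setminus Y}}(c_Y - R_Y)$, and substituting $c_Y \geq 0$ yields the deduction rules: when $\abs{X \setminus Y}$ is even we obtain the lower bound $\supp{X} \geq -R_Y$, and when $\abs{X \setminus Y}$ is odd we obtain the upper bound $\supp{X} \leq R_Y$. By the definition of (non\nobreakdash-)derivability~\cite{calders06nonderivable}, $X$ is derivable exactly when the largest of these lower bounds equals the smallest of these upper bounds; and since the true value $\supp{X}$ always lies between them, that common value is then forced to equal $\supp{X}$.

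The last step is to match tightness of an extreme bound with a vanishing value-support. From $\supp{X} = (-1)^{\abs{X \setminus Y}}(c_Y - R_Y)$ one reads off that an even-parity lower bound $-R_Y$ equals $\supp{X}$ iff $c_Y = 0$, and that an odd-parity upper bound $R_Y$ equals $\supp{X}$ iff $c_Y = 0$. Hence: there is a vector $w$ with an even number of $0$s and $\supp{X = w} = 0$ iff the best lower bound is tight (equals $\supp{X}$), and there is a vector $v$ with an odd number of $0$s and $\supp{X = v} = 0$ iff the best upper bound is tight. Both occur together iff the best lower bound and the best upper bound both equal $\supp{X}$, which is equivalent to their coinciding, which is equivalent to $X$ being derivable. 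This chain of equivalences is exactly the claimed statement.

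I expect the main obstacle to be twofold: keeping the inclusion--exclusion signs and the even/odd parity of $\abs{X \setminus Y}$ aligned with the direction (lower versus upper) of the induced bound, and arguing cleanly that when the extreme lower and upper deduction bounds coincide their common value must be $\supp{X}$. The latter is the step that lets me convert ``the bounds coincide'' into ``some lower bound and some upper bound are simultaneously tight,'' which is what the two zero value-supports $\supp{X = v} = \supp{X = w} = 0$ encode.
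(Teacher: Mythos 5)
Your proof is correct and follows essentially the same route as the paper's: both reduce the claim to the Calders--Goethals characterization that $X$ is derivable iff the tightest inclusion--exclusion lower and upper bounds coincide, and both observe that tightness of the even-parity (resp.\ odd-parity) bound is equivalent to a vanishing value-support $\supp{X = w}$ (resp.\ $\supp{X = v}$). The only difference is one of detail: you derive the bounds explicitly via M\"obius inversion and isolation of the top term, whereas the paper simply quotes the closed forms $u = \supp{X} + \min_{x}\supp{X = x}$ and $l = \supp{X} - \min_{x}\supp{X = x}$ from \cite{calders06nonderivable}.
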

\begin{proof}
\rt{
Let $O$ be the set of binary vectors of length $\abs{X}$ having odd number of $0$s and 
let $E$ be the set of binary vectors of length $\abs{X}$ having even number of $0$s.

An alternative way of describing non-derivable itemsets is to compute the following
quantities
\[
	u = \supp{X} + \min_{x \in O} \supp{X = x} \quad\text{and}\quad l = \supp{X} - \min_{x \in E} \supp{X = x}\quad.
\]
We can show that
$l \leq \supp{X} \leq u$, both $u$ and $l$ can be computed from proper subsets
of $X$ with the inclusion-exclusion principle (see~\cite{calders06nonderivable}). We also
know that an itemset is derivable if and only if $u = l$
(see~\cite{calders06nonderivable}). This is because we know then that $l = \supp{X} = u$.
Let $v = \arg \min_{x \in O} \supp{X = x}$ and $w = \arg \min_{x \in E} \supp{X = x}$.

This implies that $0 = u - l = \supp{X = v} + \supp{X = w}$, which proves the lemma.}
\end{proof}

\begin{proof}[of Proposition~\ref{prop:monotonedelete}]
An itemset is not totally shattered if there is a binary vector $v$ such that
$\supp{X = v; D} = 0$. This immediately implies that $\supp{X = v; D - \set{t}} = 0$.
Thus $\pts{}$ is monotonic w.r.t. deletion. Similarly, Lemma~\ref{lem:ndits} implies
that $\pndi{}$ is monotonic w.r.t. deletion.

An itemset $X$ is not free, if there is $x \in X$ such that there is no
transaction $u \in D$ for which $u_x = 0$ and $u_y = 1$ for all $y \in X -
\set{x}$. If this holds in $D$, then it holds for $D - \set{t}$. This makes
$\pfree{}$ monotonic w.r.t. deletion. Similarly, an itemset $X$ is not closed,
if there is $x \notin X$ such that there is no transaction $u \in D$ for which
$u_x = 0$ and $u_y = 1$ for all $y \in X$.  If this holds in $D$, then it holds
for $D - \set{t}$. This makes $\pclos{}$ monotonic w.r.t. deletion.
\end{proof}

\begin{example}
The itemset $bd$ is not closed because its superset $bde$ is always observed when $bd$ is observed.
No matter which transaction we delete (one with or without $bde$) this will not change.
Note, however, that $bde$ can become non-closed if transactions 2 and 4 are deleted
because then $abcde$ will have the same support of 2.
\end{example}

\subsection{Computing the measure}\label{sec:analytic}
In this section we demonstrate how to compute the robustness measure for the
predicates. Computing the measure directly
from the definition is impractical since $D$ has $2^\abs{D}$ different subsamples. 
It turns out that computing free, non-derivable, and totally shattered itemsets
has practical formulas while the robustness measure for closed
itemsets has no practical formulation (see Table~\ref{tab:times}).

\begin{table}
\tbl{Computational complexity of robustness and orders. Computing measures is explained in Section~\ref{sec:analytic}.
Computing orders is explained in Section~\ref{sec:order}. $K$ is the number of items, $\abs{\ifam{C}}$ is the number of frequent closed itemsets.}{
\begin{tabular}{lrrr}
\toprule
predicate & measure & order & order estimate \\
\midrule
free & $O(\abs{X})$ & $O(\abs{X})$ & --\\
totally shattered & $O(2^\abs{X})$ & $O(2^{\abs{X}})$ & --\\
closed & $O(2^{K - \abs{X}})$ & $O(2^{K - \abs{X}})$ & $O(\abs{\ifam{C}^2})$\\
non-derivable  & $O(2^\abs{X})$ & $O(\abs{D}{2^\abs{X}})$ & --\\
\bottomrule
\end{tabular}}
\label{tab:times}
\end{table}

\rt{We will first demonstrate how to compute robustness for free and totally shattered itemsets}.
In order to do that we introduce the following function:
Given an itemset $X$ and a set of binary vectors $V \subseteq \set{0, 1}^\abs{X}$
we define
\[
	\orf{X, V, \alpha} = \prod_{v \in V} 1 - (1 - \alpha)^{\supp{X = v}}\quad.
\]
Intuitively, $\orf{X, V, \alpha}$ denotes the probability of the following event: for every vector $v\in V$,
$\supp{X=v;D_\alpha}>0$. Note that since every transaction can support at most one $X=v$, the events
$\supp{X=v;D_\alpha}>0$ are independent from each other. \rt{Note that we can compute $\orf{X, V, \alpha}$ in $O(\abs{V})$
time. Our next step is to show that robustness for free itemsets can be expressed with with $\orf{X, V, \alpha}$
for a certain set of vectors $V$.}

\begin{proposition}
\label{prop:freeanalytic}
Given an itemset $X$, let $V$ be the set of $\abs{X}$ vectors having
$\abs{X} - 1$ ones and one 0.  The robustness of a free itemset is
$\sm{X; \pfree{}, \alpha} = o(X, V, \alpha)$.
\end{proposition}

\begin{proof}
Given an item $x \in X$, define an event
$T_x = \supp{X - \set{x}; D_\alpha} > \supp{X; D_\alpha}$.
$X$ is still free in $D_\alpha$ if $T_x$ is true for all $x \in X$.
$T_x$ is true if and only if $D_\alpha$ contains a transaction $t$ with $t_x = 0$ and $t_y = 1$ for $y \in X - \set{x}$.
There are $\supp{X = v; D}$ such transactions, where $v \in V$ is the vector for which $v_x = 0$.
$p(T_x)$ is the probability of not removing all these transactions, thus
\[
	p(T_x) = 1 - (1 - \alpha)^{\supp{X = v; D}}\quad.
\]
Since each of these transaction is missing only one $x \in X$, there are no common
transactions between different events $T_x$, making them independent. Thus,
we can conclude $ \sm{X; \pfree{}, \alpha} = \prod_{x \in X}p(T_x) = o(X, V, \alpha)$.
\end{proof}

\rt{A similar result also holds for totally shattered itemsets.}

\begin{proposition}
\label{prop:tsanalytic}
Given an itemset $X$, let $V$ be the set of all binary vectors of length $\abs{X}$.
The robustness of a totally shattered itemset is $\sm{X; \pts{}, \alpha} = \orf{X, V, \alpha}$.
\end{proposition}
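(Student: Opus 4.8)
The plan is to follow exactly the structure of the proof of Proposition~\ref{prop:freeanalytic}, only now the collection of ``witness'' vectors is the full set $V = \set{0,1}^{\abs{X}}$ rather than just those vectors having a single $0$. First I would translate the definition of total shattering into a statement about events in the random dataset: for each $v \in V$ let $E_v$ be the event that $\supp{X = v; D_\alpha} > 0$. By definition $X$ is totally shattered in $D_\alpha$ precisely when $\supp{X = v; D_\alpha} > 0$ for every $v \in V$, so $\sm{X; \pts{}, \alpha} = p\fpr{\bigcap_{v \in V} E_v}$.

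Next I would compute $p(E_v)$ for a fixed $v$. There are exactly $\supp{X = v; D}$ transactions $t \in D$ with $t_X = v$, and $E_v$ fails only when all of them are discarded, an event of probability $(1 - \alpha)^{\supp{X = v; D}}$ since each transaction is dropped independently with probability $1 - \alpha$. Hence $p(E_v) = 1 - (1 - \alpha)^{\supp{X = v; D}}$, which is precisely the factor appearing in $\orf{X, V, \alpha}$.

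The one genuine step is the mutual independence of the events $\set{E_v \mid v \in V}$, and this is where the key structural observation enters. Every transaction $t \in D$ projects to a single vector $t_X$, so for distinct $v \neq w$ the sets of transactions supporting $E_v$ and $E_w$ are disjoint. Since the transactions are kept or discarded independently of one another, events determined by disjoint sets of transactions are mutually independent. This is the fact already noted when $\orf{}$ was introduced, so I do not expect it to be difficult; but it is the only place the argument could go wrong, so I would state the disjointness of the supporting transaction sets explicitly before invoking independence.

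Finally, combining these pieces gives $\sm{X; \pts{}, \alpha} = p\fpr{\bigcap_{v \in V} E_v} = \prod_{v \in V} p(E_v) = \prod_{v \in V} \pr{1 - (1 - \alpha)^{\supp{X = v; D}}} = \orf{X, V, \alpha}$, which is the claimed identity.
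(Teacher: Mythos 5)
Your argument is correct and is essentially identical to the paper's own proof: the same events $\supp{X=v;D_\alpha}>0$, the same computation of each event's probability as $1-(1-\alpha)^{\supp{X=v;D}}$, and the same independence argument via the disjointness of the supporting transaction sets (a fact the paper also notes when introducing $\orf{X,V,\alpha}$). Your explicit emphasis on stating the disjointness before invoking independence is a slightly more careful write-up of the very step the paper handles with ``since no transaction can contribute to different $T_v$ being true.''
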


\begin{proof}
Given a binary vector $v \in V$, define an event
$T_v = \supp{X = v; D_\alpha} > 0$.
$X$ is still totally shattered in $D_\alpha$ if $T_v$ is true for all $v \in V$.
$p(T_v)$ is the probability of not removing all these transactions, thus
	$p(T_v) = 1 - (1 - \alpha)^{\supp{X = v; D}}$.
Again, since no transaction can contribute to different $T_v$ being true, the random variables are independent and we obtain $\sm{X; \pts{}, \alpha} = \prod_{v \in V}p(T_v) = \orf{X, V, \alpha}$.
\end{proof}

Note that the formula in Proposition~\ref{prop:tsanalytic} corresponds directly to
Example~\ref{ex:toyts}.

Let us now consider non-derivable itemsets. The analytic formula is somewhat
more complicated than for free or totally shattered itemsets, although, the
principle remains exactly the same.

\begin{proposition}
\label{prop:ndianalytic}
Given an itemset $X$, let $V$ be the set of binary vectors of length
$\abs{X}$ having odd number of ones. Similarly let $W$ be the set of binary
vectors of length $\abs{X}$ having even number of ones.  The robustness
of a non-derivable itemset is
\[
	\sm{X; \pndi{}, \alpha} = 1 - (1 - \orf{X, \alpha, V})(1 - \orf{X, \alpha, W})\quad.
\]
\end{proposition}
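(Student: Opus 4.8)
The plan is to reduce the statement to Lemma~\ref{lem:ndits} applied to the random dataset $D_\alpha$, and then to exploit the same independence structure used in the proofs of Propositions~\ref{prop:freeanalytic} and~\ref{prop:tsanalytic}. The one preliminary I would record is a bookkeeping observation: for vectors of fixed length $\abs{X}$, the parities of the number of ones and of the number of zeros differ by the constant $\abs{X} \bmod 2$, so the partition of $\set{0,1}^{\abs{X}}$ into $V$ (odd number of ones) and $W$ (even number of ones) coincides with the partition into vectors with an odd number of zeros and vectors with an even number of zeros, up to possibly swapping the two classes. Since the claimed formula is symmetric under exchanging $V$ and $W$, this relabelling is harmless, and I may read Lemma~\ref{lem:ndits} as: $X$ is derivable in a dataset if and only if some vector of $V$ and some vector of $W$ both have support $0$.

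Next I would translate this characterization into events on $D_\alpha$. Let $A$ be the event that $\supp{X = v; D_\alpha} > 0$ for every $v \in V$, and let $B$ be the event that $\supp{X = w; D_\alpha} > 0$ for every $w \in W$. Applying the relabelled lemma to the subsample $D_\alpha$, the itemset $X$ is derivable in $D_\alpha$ exactly when some vector of $V$ has zero support and some vector of $W$ has zero support, that is, on the event $\bar{A} \cap \bar{B}$. By De Morgan, $X$ is non-derivable in $D_\alpha$ precisely on the complementary event $A \cup B$, so $\sm{X; \pndi{}, \alpha} = p(A \cup B) = 1 - p(\bar{A} \cap \bar{B})$.

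By the probabilistic reading of $o$ recorded just after its definition, $p(A) = \orf{X, V, \alpha}$ and $p(B) = \orf{X, W, \alpha}$. The substantive step is independence: the event $A$ is determined only by which transactions matching some value $v \in V$ of $X$ survive the subsampling, while $B$ is determined only by the transactions matching some value $w \in W$. Because every transaction agrees with exactly one value of $X$ and $V \cap W = \emptyset$, these two pools of transactions are disjoint, and whether each is kept is decided by independent coin flips. Hence $A$ and $B$, and with them $\bar{A}$ and $\bar{B}$, are independent, so $p(\bar{A} \cap \bar{B}) = (1 - \orf{X, V, \alpha})(1 - \orf{X, W, \alpha})$, which yields the stated expression.

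I expect the main difficulty to lie in the logic rather than in any calculation: one must correctly negate the conjunctive description of derivability into the disjunction $A \cup B$, and one must keep straight that Lemma~\ref{lem:ndits} is phrased via parities of zeros whereas the proposition is phrased via parities of ones --- the symmetry remark of the first paragraph is exactly what reconciles the two. The independence claim, although it is the real probabilistic content, is the very same disjointness argument that drove the free and totally shattered cases, so I expect it to transfer without change.
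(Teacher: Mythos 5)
Your argument is correct and follows essentially the same route as the paper: the paper's proof defines the same two events (there called $T_V$ and $T_W$), invokes Lemma~\ref{lem:ndits} on $D_\alpha$, and uses the same disjointness-of-transactions argument to get independence before applying inclusion--exclusion. Your opening remark reconciling the lemma's parity-of-zeros phrasing with the proposition's parity-of-ones phrasing is a small point the paper passes over silently, and it is handled correctly.
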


\begin{proof}
Let us define the event $T_V$ to be that there is no $v \in V$ such that $\supp{X = v}
= 0$.  Similarly, let $T_W$ be the event that there is no $w \in W$
such that $\supp{X = w} = 0$.  According to Lemma~\ref{lem:ndits}, an
itemset $X$ is derivable if $T_V$ and $T_W$ are both false.

Using the same argument as with Proposition~\ref{prop:tsanalytic}, we see that
$p(T_V) = o(X, \alpha, V)$.  Similarly, $p(T_W) = o(X, \alpha, W)$.  Since $V \cap
W = \emptyset$, events $T_V$ and $T_W$ are independent. Hence,
$\sm{X; \pndi{}, \alpha}$ is equal to
\[
	1 - p(\lnot T_V \land \lnot T_W) = 1 - (1 - p(T_V)(1 - p(T_W))\quad.
\]
This completes the proof.
\end{proof}

We will now consider closed itemsets. Unlike for the free/totally shattered
itemsets, there is an exponential number of terms in the expression for the robustness. 
The key problem is that
while we can write the robustness in a similar fashion as we did in the proofs
of the previous propositions, the events $\supp{X\cup\{y\}}<\supp{X}$ for all $y\in A\setminus X$,
will no longer be independent, and hence we cannot multiply the probabilities of the individual events.
Indeed, in our running example, $bde$ is a closed itemset. The events $\supp{abde;D_\alpha}<\supp{bde;D_\alpha}$
and $\supp{bcde;D_\alpha}<\supp{bde;D_\alpha}$ are clearly dependent since both events occur in exactly the same subsamples, namely those 
that contain at least one of the transactions 3 and 5.

\begin{proposition}
\label{prop:closedanalytic}
The robustness of a closed itemset is
\[
	\sm{X; \pclos{}, \alpha} = \sum_{Y \supseteq X} (-1)^{\abs{Y} - \abs{X}}(1 - \alpha)^{\supp{X} - \supp{Y}}\quad.
\]
\end{proposition}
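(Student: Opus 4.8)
The plan is to reduce closedness to a statement about single-item extensions and then apply inclusion--exclusion over those items. First I would recall the standard fact that $X$ is closed in a dataset if and only if $\supp{X \cup \set{y}} < \supp{X}$ for every item $y \in A \setminus X$: if some superset $Y \supsetneq X$ satisfied $\supp{Y} = \supp{X}$, then any single $y \in Y \setminus X$ would already witness $\supp{X \cup \set{y}} = \supp{X}$ by monotonicity of support, and the converse is immediate. Hence it suffices to control the single-item extensions.

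For each $y \in A \setminus X$ I would introduce the ``bad'' event $E_y$ that $y$ fails to cut the support, that is, $\supp{X \cup \set{y}; D_\alpha} = \supp{X; D_\alpha}$; equivalently, every surviving transaction supporting $X$ also contains $y$. By the characterization above, $X$ is closed in $D_\alpha$ exactly when none of the $E_y$ occur, so $\sm{X; \pclos{}, \alpha} = 1 - p\fpr{\bigcup_{y} E_y}$. Since the $E_y$ are dependent (as the running example before the statement illustrates), I would not multiply probabilities but instead expand the union by inclusion--exclusion, obtaining $\sm{X; \pclos{}, \alpha} = \sum_{S \subseteq A \setminus X} (-1)^{\abs{S}} p\fpr{\bigcap_{y \in S} E_y}$, where the empty set contributes the leading $1$.

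The crux is evaluating the intersection probabilities, and this is where the dependence becomes harmless. I would observe that $\bigcap_{y \in S} E_y$ says precisely that every surviving $X$-supporting transaction also supports $X \cup S$, i.e. $\supp{X; D_\alpha} = \supp{X \cup S; D_\alpha}$. This equality holds if and only if none of the transactions supporting $X$ but not $X \cup S$ survive the subsampling; there are exactly $\supp{X} - \supp{X \cup S}$ such transactions in $D$, and each is deleted independently with probability $1 - \alpha$, so $p\fpr{\bigcap_{y \in S} E_y} = (1 - \alpha)^{\supp{X} - \supp{X \cup S}}$. Substituting and reindexing the sum by $Y = X \cup S$ --- which ranges over all supersets of $X$ as $S$ ranges over subsets of $A \setminus X$, with $\abs{S} = \abs{Y} - \abs{X}$ --- yields exactly the claimed formula.

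I expect the main obstacle to be the bookkeeping in the intersection step: one must argue cleanly that $\bigcap_{y \in S} E_y$ collapses to the single support equality $\supp{X; D_\alpha} = \supp{X \cup S; D_\alpha}$ rather than a weaker item-by-item conjunction, since it is this collapse that makes the count of separating transactions depend only on $\supp{X} - \supp{X \cup S}$ and thereby produces the clean exponent in the final expression.
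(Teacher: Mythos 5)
Your proposal is correct and follows essentially the same route as the paper: the same events $E_y$, the same inclusion--exclusion over subsets of $A \setminus X$, and the same evaluation of the intersection probability as $(1-\alpha)^{\supp{X} - \supp{X \cup S}}$ followed by reindexing with $Y = X \cup S$. The ``collapse'' step you flag as the crux is exactly the observation the paper makes (all transactions containing $X$ but not $Z$ must be excluded), and your item-by-item justification of it is sound.
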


\begin{proof}
Given an item $y \notin X$, define an event
$E_y = \supp{X \cup \set{y}; D_\alpha} = \supp{X; D_\alpha}$.
Itemset $X$ is still closed in $D_\alpha$ if all $E_y$ are false, thus
$\sm{X; \pclos{}, \alpha}$ is equal to
\[
 	1 - p\big(\bigvee_{y \notin X} E_y\big)  = \sum_{Z\subseteq (A\setminus X)} (-1)^{\abs{Z}}p\big(\bigwedge_{y \in Z} E_y\big),
\]
where the equality follows from the inclusion-exclusion principle.
Through this transformation we now need to determine the probability of all $E_y$, $y\in Z$ simultaneously being true.
For this all $\supp{X} - \supp{Z \cup X}$ transactions containing $X$ but not $Z$ must have been excluded from $D_\alpha$, hence
\[
	p\big(\bigwedge_{y \in Z} E_y\big) = (1 - \alpha)^{\supp{X} - \supp{Z \cup X}} \quad.
\]
Substituting this above and writing $Y = X \cup Z$ leads to the proposition.
\end{proof}

\begin{example}
In our running example, we have $\supp{bde} = 4$.
This itemset has $3$ superitemsets having the supports $\supp{abde} = \supp{bcde} = \supp{abcde} = 2$.
Hence, the measure $\sm{bde; \pclos{}, \alpha}$ is equal to
\[
	1 - (1 - \alpha)^{4 - 2} - (1 - \alpha)^{4 - 2} + (1 - \alpha)^{4 - 2} = 1 - (1 - \alpha)^2,
\]
where itemsets $bde$, $abde$, $bcde$, and $abcde$ correspond to the terms in the given order.
\end{example}

Unlike with the other predicates, analytic robustness for closed itemsets cannot be be computed
in practice since there are $2^{K - \abs{X}}$ terms in the analytic solution. It turns out that
we cannot do much better as computing robustness is \np-hard.

\begin{proposition}
The following \emph{Robustness of a Closed Itemset (RCI)} problem is \np-hard:
\begin{quote}
For a given database $D$ over the set of items $A$, parameters $\alpha,\rho\in [0,1]$, and itemset $X\subseteq A$, decide if
$\sm{X;\sigma_c,D,\alpha}\geq\rho$.
\end{quote}
\end{proposition}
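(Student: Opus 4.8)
The plan is to exhibit a polynomial-time many-one reduction from \emph{Set Cover} to RCI. The starting point is the combinatorial reading of closedness already used in Proposition~\ref{prop:closedanalytic}: $X$ is closed in $D_\alpha$ exactly when, for every item $y \in A \setminus X$, the subsample $D_\alpha$ still contains a surviving transaction $t$ that covers $X$ but has $t_y = 0$. Call such a $t$ a \emph{witness} for $y$. Thus $\sm{X; \pclos{}, D, \alpha}$ is precisely the probability that the witness transactions, each surviving independently with probability $\alpha$, jointly witness all items of $A \setminus X$. This is a coverage probability, and that is what ties RCI to \emph{Set Cover}.

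First I would model an arbitrary set-cover instance. Given a universe $U = \set{a_1, \ldots, a_n}$ and sets $S_1, \ldots, S_m \subseteq U$, build a database $D$ over the item set $A = U$ with one transaction $t_i$ per set, setting $(t_i)_j = 0$ iff $a_j \in S_i$ and $(t_i)_j = 1$ otherwise, and take $X = \emptyset$ (adding one item present in every transaction and letting $X$ be that singleton makes $X$ nonempty if one insists). Then $t_i$ witnesses item $a_j$ exactly when $a_j \in S_i$, so a surviving subcollection of transactions witnesses all of $A \setminus X = U$ if and only if the corresponding chosen sets $S_i$ cover $U$. Consequently $\sm{\emptyset; \pclos{}, D, \alpha}$ equals the probability that a random subcollection of $\set{S_1, \ldots, S_m}$, each set kept with probability $\alpha$, covers $U$.

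The remaining step is to choose $\alpha$ and $\rho$ so that this aggregate probability separates the instances that admit a small cover from those that do not. I would take $\alpha$ exponentially small, say $\alpha = 2^{-(m+2)}$, together with $\rho = \tfrac{1}{2}\alpha^{k}$; both have polynomially many bits. For the lower bound, if some cover $C$ has $\abs{C} \leq k$, then the event that all sets of $C$ survive already forces a cover, so the probability is at least $\alpha^{\abs{C}} \geq \alpha^{k} > \rho$. For the upper bound, observe that the covering event is the union, over minimal covers $C$, of the events ``all sets of $C$ survive''; if the minimum cover size is at least $k+1$, a union bound gives probability at most $2^{m}\alpha^{k+1} = \tfrac{1}{4}\alpha^{k} < \rho$. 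Hence $\sm{\emptyset; \pclos{}, D, \alpha} \geq \rho$ if and only if $U$ has a cover of size at most $k$, and since \emph{Set Cover} is \np-complete, RCI is \np-hard.

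The main obstacle is exactly this calibration. Robustness aggregates over \emph{all} surviving subcollections rather than asserting the existence of a single small cover, so I must guarantee that the contribution of the larger covers (and the cancelling inclusion--exclusion terms implicit in the closed form of Proposition~\ref{prop:closedanalytic}) cannot push the probability across $\rho$ the wrong way; the exponentially small $\alpha$ is what makes the leading $\alpha^{k}$ term dominate the $O(2^{m}\alpha^{k+1})$ tail while keeping $\alpha$ and $\rho$ of polynomial size. A secondary point to verify carefully is that the coverage reading agrees with the closed-form robustness and that boundary cases (the empty subsample, or an item belonging to no set) behave harmlessly.
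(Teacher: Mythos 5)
Your proof is correct and follows essentially the same route as the paper's: the paper reduces from Vertex Cover (a special case of your Set Cover reduction) using the identical encoding --- one transaction per vertex/set with a $0$ marking each covered item, $X=\emptyset$, and an exponentially small $\alpha$ so that the $\alpha^k$ term from a size-$k$ cover dominates the $O(2^{\abs{D}}\alpha^{k+1})$ contribution of all larger surviving subcollections. Your calibration (lower bound $\alpha^k$ via ``the cover survives,'' upper bound via a union bound over minimal covers) is a minor cosmetic variant of the paper's bounds $L=\alpha^k(1-\alpha)^{\abs{D}-k}$ and $U=\sum_{j\geq k+1}\binom{\abs{D}}{j}\alpha^j(1-\alpha)^{\abs{D}-j}$, and both are valid.
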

\begin{proof}
We will reduce the well-known \np-complete vertex cover problem to the RCI problem.
Let $G(V,E)$ be a graph. For every vertex $v\in V$, we will create a unique transaction with identifier $\tid_v$. The set of items over which the transactions will be defined
is the set of edges $E=\{e_1,\ldots,e_K\}$.
Let $t_v=[t_{v1},\ldots,t_{vK}]$ denote the binary vector of length $\abs{E}$ defined as: for all $i=1,\ldots, K$,
\[
	t_{vi} = 1 \quad\text{if and only if}\quad e_i\text{ is not incident with }v\quad.
\]
The transaction database $D$ is now defined as
\[
	D=\set{(tid_v,t_v)\mid v\in V}\quad.
\]

The itemset $X$ in the RCI-problem will be the empty set, $X = \emptyset$. Before we specify $\alpha$ and $\rho$, we show the following property:
\begin{lemma}
Let $S\subseteq D$; $\emptyset$ is closed in $S$ if and only if $V_S =\set{v \in V \mid (\tid_v,t_v) \in S}$ is a vertex cover of $G$.
\end{lemma}

\begin{proof}
If $\emptyset$ is closed in $S$, then for every $e$ there is $t \in D$ such that
$t_e = 0$, otherwise $\supp{e} = \supp{\emptyset}$ .
Hence, for all $e\in E$ there must exist at least one
$v\in V_S$ $t_{ve} = 0$, that is, $e$ must be incident
with $v$. Since $e$ was chosen arbitrary, this implies that every edge in $E$
is covered by at least one node in $V_S$ and hence $V_S$ is a vertex cover of
$G$.
\end{proof}

This relation between the closedness of $\emptyset$ in a subsample $S$ and $V_S$ being a vertex-cover allows us to establish the following relation between the robustness of $\emptyset$ in $D$ and the existence of a vertex-cover of size $k$, that holds for any $\alpha\in[0,1]$.
\begin{lemma}
If $G$ has a vertex cover of size $k$,
\[
	\sm{\emptyset;\sigma_c,D,\alpha}\geq \alpha^k(1-\alpha)^{|D|-k}
\]
otherwise,
\[
	\sm{\emptyset;\sigma_c,D,\alpha}\leq \sum_{j={k+1}}^{|D|}\alpha^j(1-\alpha)^{|D|-j}{|D| \choose j}\quad.
\]
\end{lemma}
\begin{proof}
Indeed, let $\mathit{VC}$ be a vertex cover of $G$, then $\emptyset$ is closed in $S=\{(\tid_v,t_v)~|~v\in \mathit{VC}\}$.
The probability that a randomly selected sample equals $S$ is equal to
\[
	L = \alpha^k(1-\alpha)^{|D|-k},
\]
which is a lower bound on the robustness of $\emptyset$.
Otherwise, if there does not exist a vertex cover of size $k$, this implies that $\emptyset$ is not closed in any subsample $S$ of size $k$ or less. Therefore, the probability mass of all subsamples with at least $k+1$ transactions
\[
	U = \sum_{j={k+1}}^{|D|}\alpha^j(1-\alpha)^{|D|-j}{|D| \choose j}
\]
is an upper bound on the robustness of $\emptyset$.
\end{proof}

The proof now concludes by carefully choosing $\alpha$ such that $U\leq L$, and selecting $\rho$ such that $U\leq \rho\leq L$; in that way, the robustness of the closedness of $\emptyset$
exceeds $L$ and hence $\rho$ if $G$ has a vertex cover of size $k$ or less, and otherwise the robustness is below $U$, and hence also below $\rho$. The last step in the proof is hence
to show that we can always pick $\alpha$ such that $U\leq L$. It can easily be seen that $\alpha=2^{-(|D|+1)}$ satisfies this condition:
Since $\alpha \leq 1/2$, we can now bound $U$ by
\[
	\sum_{j={k+1}}^{|D|}\alpha^j(1-\alpha)^{|D|-j}{|D| \choose j} \leq \sum_{j={k+1}}^{|D|}\alpha^{k + 1}(1-\alpha)^{|D|- k - 1}{|D| \choose j} = 2^{|D|} \alpha^{k+1}(1-\alpha)^{|D|-k-1}\quad.
\]
The right hand-side is smaller than $L$
if and only if $1-\alpha \geq 2^{|D|} \alpha$. Note that for our choice of $\alpha$, we have $1 - \alpha = 1 - \alpha=2^{-(|D|+1)} \geq 1/2 = 2^{\abs{D}}\alpha$.

The binary representation of the numbers $\alpha$ and $\rho$ are polynomial in the size of the original vertex cover problem and the reduction can be carried out in polynomial time.\qed
\end{proof}

\section{Ordering patterns}\label{sec:order}
The robustness measure depends on the parameter
$\alpha$. In this section we propose a parameter-free approach. The idea is to
study how the measure is behaving when $\alpha$ is \emph{close} to $1$. We can
show that there is a (small) neighborhood close to 1, where the \emph{ranking}
of itemsets does not depend on $\alpha$, \rt{that is, there exists $\beta < 1$
such that if $\alpha, \alpha'  \in [\beta, 1]$
$\sm{X; \sigma, D, \alpha} \leq \sm{Y; \sigma, D, \alpha}$ if and only if
$\sm{X; \sigma, D, \alpha'} \leq \sm{Y; \sigma, D, \alpha'}$.}

We will show how compute the ranking in this region, that can be used to select
top-$k$ itemsets by robustness without actually computing the measure or
determining $\beta$.

\rt{In this section we will first give first formal definition, and discuss the
theoretical properties of the ranking. In the next section we demonstrate how we can
compute the order in practice, that is, how to avoid determining $\beta$ and
computing the actual robustness.}

\subsection{Measuring robustness when $\alpha$ approaches $1$}

When $\alpha = 1$ then $D_\alpha = D$ with probability $1$ and the measure is
equivalent to the underlying predicate, providing only a crude ranking:
itemsets that satisfy the predicate vs. itemsets that do not.  If we make
$\alpha$ slightly smaller the measure will decrease a little bit for each
itemset.  The amount of this change will vary from one itemset to another based
on how likely removing only very few transactions will break the predicate for
this itemset. We can use the magnitude of this change to obtain a more
fine-grained ranking by robustness. The key result for this is
that there is a small neighborhood below 1 in which the ranking of itemsets
based on the measure does not depend on $\alpha$.

\begin{proposition}
\label{prop:rankok}
Given a predicate $\sigma$ and a dataset $D$, there exists a number $\beta < 1$
such that
\[
	\sm{X; \sigma, D, \alpha} \leq \sm{Y; \sigma, D, \alpha} \quad\text{ if and only if }\quad \sm{X; \sigma, D, \alpha'} \leq \sm{Y; \sigma, D, \alpha'},
\]
for any itemset $X$ and $Y$ and $\beta \leq \alpha \leq 1$, $\beta \leq \alpha' \leq 1$.
\end{proposition}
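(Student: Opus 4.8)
The plan is to read off the ranking from signs of polynomials. Directly from the definition,
\[
	\sm{X; \sigma, D, \alpha} = \sum_{\mathclap{\sigma(X; S) = 1}} \alpha^{\abs{S}}(1 - \alpha)^{\abs{D} - \abs{S}}
\]
is a finite sum of products of powers of $\alpha$ and $1-\alpha$, hence a univariate polynomial in $\alpha$ of degree at most $\abs{D}$. Consequently, for any two itemsets $X$ and $Y$ the difference
\[
	g_{XY}(\alpha) = \sm{Y; \sigma, D, \alpha} - \sm{X; \sigma, D, \alpha}
\]
is again a polynomial, and the relative order of $X$ and $Y$ under the measure at a given $\alpha$ is precisely the sign of $g_{XY}(\alpha)$.

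First I would dispose of the trivial pairs: if $g_{XY}$ is identically zero then $\sm{X; \sigma, D, \alpha} = \sm{Y; \sigma, D, \alpha}$ for every $\alpha$, so the claimed equivalence holds for that pair regardless of $\beta$. For every remaining pair $g_{XY}$ is a nonzero polynomial and therefore has only finitely many real roots, in particular finitely many in $[0,1)$. Thus there is a value $\beta_{XY} < 1$ strictly above the largest such root (take $\beta_{XY} = 0$ if there is none); on $(\beta_{XY}, 1)$ the polynomial $g_{XY}$ has no root and, being continuous, keeps a constant sign there.

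Next I would glue the thresholds together. Since $A$ is finite there are at most $2^{\abs{A}}$ itemsets and hence finitely many nontrivial pairs; putting $\beta = \max_{X,Y} \beta_{XY}$ still gives $\beta < 1$. For $\alpha, \alpha' \in [\beta, 1)$ and any pair, $g_{XY}(\alpha)$ and $g_{XY}(\alpha')$ share the same sign, which is exactly the statement that $\sm{X; \sigma, D, \alpha} \leq \sm{Y; \sigma, D, \alpha}$ holds if and only if $\sm{X; \sigma, D, \alpha'} \leq \sm{Y; \sigma, D, \alpha'}$. This pins down one fixed ranking throughout the neighborhood $[\beta, 1)$, independent of $\alpha$.

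The delicate point — and the one I expect to be the main obstacle — is the endpoint $\alpha = 1$, where the measure collapses to the bare predicate, $\sm{X; \sigma, D, 1} = \sigma(X; D) \in \set{0,1}$. A nonzero $g_{XY}$ can still vanish at $1$: this is precisely the case where $X$ and $Y$ carry the same predicate value on $D$ yet have strictly different robustness for $\alpha < 1$. There the two measures become equal, so a strict order on $(\beta,1)$ degenerates to a tie at $1$, and the nonstrict equivalence can fail if one compares an interior $\alpha'$ against $\alpha = 1$. The clean way to finish is to read the conclusion as fixing the ranking on the open neighborhood just below $1$, where the finer $\alpha$-independent order actually lives, and to use continuity of each $g_{XY}$ to confirm that the limiting signs at $1$ are consistent with the interior signs up to these unavoidable ties.
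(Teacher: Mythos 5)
Your argument is essentially the paper's: the difference of two robustness values is a polynomial in $\alpha$, a nonzero polynomial has finitely many roots so its sign is eventually constant on an interval $(\beta_{XY},1)$, and taking the maximum over the finitely many pairs of itemsets yields a single $\beta$. Your closing caveat about the endpoint is a genuine observation that the paper's own proof also glosses over: constancy of the \emph{weak} sign on $[\beta,1]$ does not give the stated biconditional for a pair whose measures are strictly ordered on $(\beta,1)$ but tie at $\alpha=1$ (for instance, two itemsets that both satisfy $\sigma$ on $D$ yet have different $c$-vectors in the sense of Proposition~\ref{prop:breakdown}), so the proposition should be read as fixing the ranking on $[\beta,1)$; note that the one-directional inequality used in Definition~\ref{def:rank} is unaffected, since a weak inequality that holds on $(\beta,1)$ extends to the tie at $\alpha=1$.
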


\begin{proof}
Fix $X$ and $Y$ and consider
\[
	f(\alpha) = \sm{X; \sigma, D, \alpha} - \sm{Y; \sigma, D, \alpha}\quad.
\]
Since the measure is a finite sum of probabilities that are, according to
Eq.~\ref{eq:prob}, polynomials of $\alpha$, the function $f$ is a polynomial.
This implies that $f$ can have only a finite number of $0$s, of $f = 0$.  Consequently there
is a neighborhood $N = [\beta, 1]$ such that either $f(\alpha) \geq 0$ for any
$\alpha \in N$, or $f(\alpha) \leq 0$ for $\alpha \in N$. Since there is only
a finite number of itemsets, we can take the maximum of all $\beta$s to prove the
theorem.
\end{proof}

Proposition~\ref{prop:rankok} allows us to define an order for itemsets
based on the measure for $\alpha \approx 1$.

\begin{definition}
\label{def:rank}
Given a predicate $\sigma$, and a dataset $D$, we say that $X \preceq_\sigma Y$,
where $X$ and $Y$ are itemsets, if there exists $\beta < 1$ such that
	$\sm{X; \sigma, D, \alpha} \leq \sm{Y; \sigma, D, \alpha}$
for any $\alpha$ such that $\beta \leq \alpha \leq 1$.
Moreover, if
	$\sm{X; \sigma, D, \alpha} < \sm{Y; \sigma, D, \alpha}$ for some
$\alpha \geq \beta$, then we write $X \prec_\sigma Y$.
\end{definition}

Note that Proposition~\ref{prop:rankok} implies that $\preceq_\sigma$ is a total linear order.  That is, we can use this
relation to order itemsets.

\subsection{Properties of the order}

In this section we will study the properties of the order. \rt{Namely, we will show
two properties:
\begin{itemize}
\item We will show in Proposition~\ref{prop:breakdown} that robustness for $\alpha
\approx 1$, essentially measures how many transactions we need to remove in
order to make the predicate fail. The more transactions are needed, the more
robust is the itemset.

\item We will show in Proposition~\ref{prop:asympt} that when we increase the number of
transactions, then a ranking based on robustness \emph{for any fixed} $\alpha$ will become
equivalent with the ranking based on $\prec_\sigma$.
\end{itemize}
}

First, we will need the following key lemma that can be proven by elementary real analysis.

\begin{lemma}
\label{lem:coeff}
Let $f(x) = \sum_{i = 0}^N a_ix^i$ be a non-zero polynomial. Let $k$ be the first index such that $a_k \neq 0$
If $a_k > 0$, then
there is a $\beta > 0$ such that $0 \leq x \leq \beta$ implies $f(x) \geq 0$.
Similarly, if $a_k < 0$, then
there is a $\beta > 0$ such that $0 \leq x \leq \beta$ implies $f(x) \leq 0$.
\end{lemma}

The lemma essentially says that if we express the robustness as a polynomial of
$1 - \alpha$, then we can determine the order by studying the coefficients of
the polynomial.

Our first application of this lemma is a characterization of the order. Assume
two itemsets $X$ and $Y$. Assume that we need to remove $n$ transactions in
order to make the predicate $\sigma(Y)$ fail and that we can fail
$\sigma(X)$ by removing less than $n$ transactions. Then it holds that $X
\prec_\sigma Y$. The following proposition generalizes this idea.

\begin{proposition}
\label{prop:breakdown}
Let $\sigma$ be a predicate, $X$ and $Y$ two itemsets, and $D$ a dataset. Define
a vector $c(X)$ of length $\abs{D}$ such that $c_k(X)$ is the number of subsamples of
$D$ with $\abs{D} - k$ points failing the predicate $\sigma(X)$. Similarly, define $c(Y)$.
Then, $c(X) = c(Y)$ implies that $\sm{X; \sigma, D, \alpha} = \sm{Y; \sigma, D, \alpha}$ for any $\alpha$.
If $c(X)$ is larger than $c(Y)$ in lexicographical order, then $X \prec_\sigma Y$.
\end{proposition}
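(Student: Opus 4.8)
The plan is to express the robustness as an explicit polynomial in the variable $1-\alpha$ whose coefficients are exactly the entries of the vector $c(X)$, and then invoke Lemma~\ref{lem:coeff} to read off the order from a lexicographic comparison of coefficients. First I would write the robustness as a sum over subsamples. Recall from the definition that
\[
	\sm{X; \sigma, D, \alpha} = \sum_{\sigma(X; S) = 1} \alpha^{\abs{S}}(1-\alpha)^{\abs{D} - \abs{S}}.
\]
It is cleaner to work with the complementary quantity, the total probability mass of subsamples where the predicate \emph{fails}: since the probabilities over all $S \subseteq D$ sum to $1$, we have $\sm{X; \sigma, D, \alpha} = 1 - g_X(\alpha)$, where $g_X(\alpha) = \sum_{\sigma(X;S)=0} \alpha^{\abs{S}}(1-\alpha)^{\abs{D}-\abs{S}}$. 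Grouping the failing subsamples by size, and writing $k = \abs{D} - \abs{S}$ for the number of deleted transactions, the number of failing subsamples of size $\abs{D}-k$ is precisely $c_k(X)$ by definition. Substituting $\abs{S} = \abs{D}-k$ gives
\[
	g_X(\alpha) = \sum_{k=0}^{\abs{D}} c_k(X)\, \alpha^{\abs{D}-k}(1-\alpha)^{k}.
\]

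The key observation is that this is already organized by powers of $1-\alpha$ once we substitute $\alpha = 1-x$ with $x = 1-\alpha$: the factor $(1-\alpha)^k$ becomes $x^k$, and $\alpha^{\abs{D}-k} = (1-x)^{\abs{D}-k}$ contributes only higher-order terms in $x$. Expanding $(1-x)^{\abs{D}-k}$ and collecting, the coefficient of $x^k$ in $g_X$ receives a contribution $c_k(X)$ from the $k$th summand (whose leading term is $c_k(X)x^k$) plus corrections from summands with index $j<k$. I would therefore argue that the first index at which $g_X$ and $g_Y$ differ as polynomials in $x$ is exactly the first index at which $c(X)$ and $c(Y)$ differ, and that the sign of the difference of the leading coefficients agrees with the lexicographic comparison of the two vectors. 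Concretely, set $f(x) = g_X(\alpha) - g_Y(\alpha) = \sum_k (c_k(X)-c_k(Y)) \alpha^{\abs{D}-k} x^k$; if $c(X)=c(Y)$ then $f \equiv 0$ and the two measures coincide for all $\alpha$, giving the first claim immediately.

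For the strict inequality, I would apply Lemma~\ref{lem:coeff} to $f(x)$. If $c(X)$ is lexicographically larger than $c(Y)$, let $m$ be the first index where they differ, so $c_m(X) > c_m(Y)$ and $c_j(X)=c_j(Y)$ for $j<m$. The main obstacle is that the lower-order summands ($j<m$) still contribute to the coefficient of $x^m$ in $f$ through the expansion of $(1-x)^{\abs{D}-j}$, so the raw coefficient of $x^m$ is not simply $c_m(X)-c_m(Y)$. The clean way around this is to note that for $j<m$ the differences $c_j(X)-c_j(Y)$ vanish, so those summands drop out of $f$ entirely; hence the lowest-order nonzero term of $f$ in $x$ comes from $j=m$ and equals $(c_m(X)-c_m(Y))\,x^m + O(x^{m+1})$, with a strictly positive leading coefficient. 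By Lemma~\ref{lem:coeff} there is a $\beta$ with $f(x) \geq 0$, i.e. $g_X(\alpha) \geq g_Y(\alpha)$, for $x \in [0,\beta]$, and strictly positive for small $x>0$. Translating back, $\sm{X;\sigma,\alpha} = 1 - g_X(\alpha) \leq 1 - g_Y(\alpha) = \sm{Y;\sigma,\alpha}$ near $\alpha = 1$, with strict inequality somewhere, which is exactly $X \prec_\sigma Y$ by Definition~\ref{def:rank}. This completes the proof.
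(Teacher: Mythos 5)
Your proposal is correct and follows essentially the same route as the paper's proof: both group the failing subsamples by the number of deleted transactions to write $1-\sm{X;\sigma,D,\alpha}=\sum_k c_k(X)\,\alpha^{\abs{D}-k}(1-\alpha)^k$, observe that in the difference all terms of index below the first disagreement cancel so the lowest-order coefficient in $1-\alpha$ is exactly $c_m(X)-c_m(Y)>0$, and conclude via Lemma~\ref{lem:coeff}. Your explicit handling of the cross-contributions from $\alpha^{\abs{D}-k}=(1-x)^{\abs{D}-k}$ is a slightly more careful rendering of the step the paper compresses into ``$f(x)$ is a polynomial whose terms have degree bigger than $l$,'' but the argument is the same.
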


\begin{proof}
Let us first write the robustness of $X$ using the vector $c_k(X)$. We have,
\[
\begin{split}
	1 - \sm{X; \sigma, D, \alpha} &= p(\sigma(X; D_\alpha) = 0) = \sum_{k = 0}^\abs{D} p(\sigma(X; D_\alpha) = 0, \abs{D_\alpha} = \abs{D} - k) \\
	 & = \sum_{k = 0}^\abs{D} (1 - \alpha)^k\alpha^{\abs{D} - k}\ \sum_{\mathclap{\substack{S \subseteq D \\\abs{S} = \abs{D} - k}}}\ 1 - \sigma(X; S) = \sum_{k = 0}^\abs{D} (1 - \alpha)^k\alpha^{\abs{D} - k} c_k(X)\quad.
\end{split}
\]
If $c_k(X) = c_k(Y)$ it follows immediately that the robustness for $X$ and $Y$ are identical.

Assume now that $c(X)$ is larger than $c(Y)$ in lexicographical order. That it, there is $l$
such that $c_l(X) > c_l(Y)$  and $c_k(X) = c_k(Y)$ for $k < l$. We have
\[
\begin{split}
	\sm{Y; \sigma, D, \alpha} - \sm{X; \sigma, D, \alpha} & = \sum_{k = l}^\abs{D} (1 - \alpha)^k\alpha^{\abs{D} - k} (c_k(X) - c_k(Y)) \\
	                                                      & = (c_l(X) - c_l(Y))(1 - \alpha)^l + f(1 - \alpha),
\end{split}
\]
where $f(x)$ is a polynomial such that the degree of an individual term in $f$ is bigger than $l$.
Lemma~\ref{lem:coeff} now proves the proposition.
\end{proof}

Interestingly enough, if we would define the order based on $\alpha \approx 0$,
then we have a similar result with the difference that instead of deleting
transactions we would be adding them. We would rank $Y$ higher than $X$ if we
can satisfy $\sigma(Y)$ with less transactions than the number of transactions
needed to satisfy $\sigma(X)$.

Ranking itemsets based on how many transactions can be deleted is similar to the breakdown point that measures robustness of statistical estimators. The breakdown point for estimators such as the mean is the number of observations that can be made arbitrarily large before the estimator becomes arbitrarily large as well. The breakdown value of the mean is 1, it becomes infinity as soon as one observation is set to infinity. In contrast the median can handle just under half of the observations to be set to infinity before it breaks down.

We will next show that, in essence, for large datasets the robustness for
\emph{any} $\alpha > 0$ will produce the same ranking as the order defined
for $\alpha$ close to $0$. For this we will consider predicates only of certain type.
\rt{The reason for this is to avoid some pathological predicates, for example, 
$\sigma(X; D) = 1$ if $\abs{D}$ is even, and $0$ otherwise.}

\begin{definition}
Let $\sigma$ be a predicate. Let $K$ be the number of items and let $X$ be an
itemset.  We say that $\sigma$ is a \emph{monotone CNF predicate} if there is a
collection $\set{B_i}_1^L$ of sets of binary vectors of length $K$, (possibly)
depending on $X$ and $K$ such that
\[
	\sigma(X; D) =
	\begin{cases}
		1 & \text{if } D \cap B_i \neq \emptyset \text{ for each } i = 1, \ldots, L, \\
		0 & \text{otherwise}, \\
	\end{cases}
\]
that is, in order to $\sigma(X; D) = 1$, $D$ must contain a transaction from each $B_i$.
\end{definition}

Every predicate we consider in this paper is in fact a monotone CNF predicate.

\begin{proposition}
Predicates $\sigma_c$, $\sigma_f$, $\sigma_n$, and $\sigma_s$ are monotone CNF predicates.
\end{proposition}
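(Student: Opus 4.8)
\textbf{Proof plan.} The plan is to exhibit, for each of the four predicates, an explicit family $\set{B_i}_1^L$ of sets of binary vectors of length $K$ (depending on $X$) and check that the predicate equals $1$ exactly when $D$ meets every $B_i$. Throughout, say that $D$ \emph{hits} a set $B$ if $D \cap B \neq \emptyset$. Three of the four cases are immediate rereadings of the characterizations already used in the proof of Proposition~\ref{prop:monotonedelete}; the non-derivable case needs one extra logical manipulation and is the only real obstacle.

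First I would dispose of the three easy cases. For $\pts{}$, for each binary vector $v$ of length $\abs{X}$ set $B_v = \set{t \in \set{0,1}^K \mid t_X = v}$; total shattering says precisely that $\supp{X = v} > 0$ for every $v$, that is, that $D$ hits every $B_v$. For $\pfree{}$, the characterization used earlier states that $X$ is free iff for each $x \in X$ there is a transaction $u$ with $u_x = 0$ and $u_y = 1$ for all $y \in X - \set{x}$; collecting these transactions into $B_x = \set{t \mid t_x = 0,\ t_y = 1 \text{ for } y \in X - \set{x}}$ expresses freeness as the conjunction of the hitting conditions over $x \in X$. For $\pclos{}$, the analogous characterization is that $X$ is closed iff for each $x \notin X$ there is a transaction $u$ with $u_x = 0$ and $u_y = 1$ for all $y \in X$; the sets $B_x = \set{t \mid t_x = 0,\ t_y = 1 \text{ for } y \in X}$, ranging over $x \in A \setminus X$, then do the job.

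The non-derivable case is where the work lies. Writing $A_v = \set{t \mid t_X = v}$, Lemma~\ref{lem:ndits} tells us that $X$ is derivable iff there exist $v \in V$ and $w \in W$ with $\supp{X = v} = \supp{X = w} = 0$, so $X$ is non-derivable iff $D$ hits every $A_v$ with $v \in V$ \emph{or} $D$ hits every $A_w$ with $w \in W$. This is a disjunction of two conjunctions and is not yet in monotone CNF form. The key step is to distribute: $\big(\bigwedge_{v} P_v\big) \lor \big(\bigwedge_{w} Q_w\big)$ is logically equivalent to $\bigwedge_{v,w}\,(P_v \lor Q_w)$, and here ``$D$ hits $A_v$ or $D$ hits $A_w$'' is exactly the statement that $D$ hits the union $A_v \cup A_w$. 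Hence setting $B_{v,w} = A_v \cup A_w$ for each pair $(v,w) \in V \times W$ rewrites $\pndi{}$ as a conjunction of $\abs{V}\abs{W}$ hitting conditions.

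The only subtlety to double-check is that each $B_{v,w} = A_v \cup A_w$ is again a legitimate set of binary vectors of length $K$ (it plainly is), so that the distributed form genuinely matches the definition of a monotone CNF predicate; every other step is a direct restatement of facts established earlier in the paper.
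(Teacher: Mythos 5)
Your proof is correct and follows essentially the same route as the paper: the same hitting-sets $B_i$ for free, closed, and totally shattered itemsets, and the same $B_{v,w} = A_v \cup A_w$ construction for non-derivable itemsets via Lemma~\ref{lem:ndits}. The only difference is that you spell out the distributive-law step turning the disjunction of conjunctions into CNF, which the paper leaves implicit.
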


\begin{proof}
Fix an itemset $X = x_1\cdots x_N$, and $K$, the total number of items. Let $\Omega = \set{0, 1}^K$ be the
collection of all binary vectors of length $K$.

\paragraph{Free itemsets} Let $B_i = \set{t \in \Omega \mid t_{x_i} = 0,
t_{x_j} = 1, j \neq i, 1 \leq j \leq N}$ for $i = 1, \ldots, N$.  In order to $X$ to be free in
$D$, we must have  $D \cap B_i \neq \emptyset$. Otherwise, $\supp{X} = \supp{X
\setminus \set{x_i}}$, making $X$ not free.

\paragraph{Closed itemsets} Define $K - N$ sets by $B_i = \set{t \in \Omega \mid t_{x_j} = 1, t_i = 0, 1 \leq j, \leq N}$ for $i \notin X$.  $X$ is closed in
$D$ if and only if  $D \cap B_i \neq \emptyset$. Otherwise, $\supp{X} = \supp{X \cup \set{x_i}}$, making $X$ not closed.

\paragraph{Totally shattered itemsets} Define $2^K$ sets by $B_u = \set{t \in \Omega \mid t_{x_j} = u_j, 1 \leq j, \leq N}$ for each $u \in \set{0, 1}^K$.
The proposition follows directly from the definition.

\paragraph{Non-derivable itemsets} Let $C_u = \set{t \in \Omega \mid t_{x_j} = u_j, 1 \leq j, \leq N}$ for each $u \in \set{0, 1}^K$.
Define $4^{K - 1}$ sets by $B_{u, v} = C_u \cup C_v$, where $u, v \in \set{0, 1}^K$, $u$ has odd number of 1s and $v$ has even number of 1s.
The proposition follows directly Lemma~\ref{lem:ndits}.
\end{proof}

\begin{example}
In our running example, an itemset $\mathit{bde}$ is closed if and only $D$
contains at least one transaction from $B_1 = \set{(0, 1, 1, 1, 1), (0, 1, 0, 1, 1)}$
and from $B_2 = \set{(1, 1, 0, 1, 1), (0, 1, 0, 1, 1)}$.  The dataset
does contain $(0, 1, 0, 1, 1)$ making $\mathit{bde}$ closed.
\end{example}

In order to prove the main result we need the following lemma showing that the robustness
of a monotone CNF predicate can be expressed in a certain way.
\rt{We can then exploit this expression in Proposition~\ref{prop:asympt}.}

\begin{lemma}
\label{lemma:cnfform}
Let $\sigma$ be a monotone CNF predicate and let $X$ be an itemset. Let $K$ be the number of itemsets.
Then there is a set of coefficients $\set{c_i}_1^N$ and a collection $\set{S_i}_1^N$ of sets of binary vectors of length $K$
such that
\[
	\sm{X; \sigma, D, \alpha} = \sum_{i = 1}^N c_i (1 - \alpha)^{\abs{D \cap S_i}}\quad.
\]
\end{lemma}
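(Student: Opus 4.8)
The plan is to express the robustness through the complementary event and then expand with the inclusion--exclusion principle, mirroring the proof of Proposition~\ref{prop:closedanalytic} but carried out abstractly for the family $\set{B_i}_1^L$. For each $i = 1, \ldots, L$, I would let $E_i$ denote the event that $D_\alpha \cap B_i = \emptyset$, i.e., that every surviving transaction avoids $B_i$. By the definition of a monotone CNF predicate, $\sigma(X; D_\alpha) = 1$ holds exactly when none of the $E_i$ occur, so
\[
	\sm{X; \sigma, D, \alpha} = 1 - p\Big(\bigvee_{i = 1}^L E_i\Big) = \sum_{I \subseteq \set{1, \ldots, L}} (-1)^\abs{I}\, p\Big(\bigwedge_{i \in I} E_i\Big),
\]
where the second equality is inclusion--exclusion and the empty index set $I = \emptyset$ supplies the constant term $1$ (the empty conjunction being vacuously true).

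The crux of the argument, and the one place requiring care, is the evaluation of the joint probability $p(\bigwedge_{i \in I} E_i)$. The conjunction $\bigwedge_{i \in I} E_i$ asserts that every transaction whose vector lies in $B_i$ is discarded, for every $i \in I$; this is the same as discarding every transaction whose vector lies in the union $S_I = \bigcup_{i \in I} B_i$. Since transactions are deleted independently, each with probability $1 - \alpha$, this probability depends only on the number of transactions of $D$ falling into the union, giving
\[
	p\Big(\bigwedge_{i \in I} E_i\Big) = (1 - \alpha)^{\abs{D \cap S_I}}\quad.
\]
I expect the main obstacle to be precisely the overlap bookkeeping here: because the sets $B_i$ may intersect, one must \emph{not} multiply the individual probabilities $(1 - \alpha)^{\abs{D \cap B_i}}$, since a transaction lying in several $B_i$ would then be counted more than once. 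Passing to the union $S_I$ is exactly what avoids this double counting, and it is the reason the events $E_i$ need not be independent for the formula to hold.

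Substituting this value back into the inclusion--exclusion sum yields
\[
	\sm{X; \sigma, D, \alpha} = \sum_{I \subseteq \set{1, \ldots, L}} (-1)^\abs{I} (1 - \alpha)^{\abs{D \cap S_I}}\quad,
\]
which is already in the claimed form. To match the statement, I would re-index the $2^L$ subsets $I \subseteq \set{1, \ldots, L}$ as $i = 1, \ldots, N$ with $N = 2^L$, setting $c_i = (-1)^\abs{I}$ and $S_i = \bigcup_{j \in I} B_j$, with $S_\emptyset = \emptyset$ producing the leading term $1$. Apart from the union argument just described, every remaining step is the routine inclusion--exclusion manipulation already used for closed itemsets, so no further difficulty is anticipated.
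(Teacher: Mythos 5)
Your proposal is correct and follows essentially the same route as the paper: both express the robustness as one minus the probability that some $B_i$ is missed, apply inclusion--exclusion, and evaluate each joint term as $(1-\alpha)^{\abs{D \cap \bigcup_{i \in I} B_i}}$ by passing to the union of the relevant $B_i$ before counting transactions. The paper writes the expansion term by term rather than indexing over subsets $I$, but the argument is identical.
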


\begin{proof}
Let $S$ be a set of binary vectors of length $K$. The probability of a random subsample $D_\alpha$
not having a transaction from $S$ is equal to
\[
	p(D_\alpha \cap S = \emptyset) = (1 - \alpha)^{\abs{D \cap S}}\quad.
\]
We can rewrite the robustness using the inclusion-exclusion principle,
\[
\begin{split}
	\sm{X; \sigma, D, \alpha} & = 1 - p(\sigma(X; D_\alpha) = 0) = 1 - p(D_\alpha \cap B_1 = \emptyset \lor \cdots \lor D_\alpha \cap B_L = \emptyset) \\
	                          & = 1 - \sum_{i = 1}^L p(D_\alpha \cap B_i = \emptyset) + \sum_{1 \leq i < j \leq L} p(D_\alpha \cap ((B_i \cup B_j) = \emptyset) + \cdots \\
	                          & = 1 - \sum_{i = 1}^L (1 - \alpha)^{\abs{D \cap B_i}} + \sum_{1 \leq i < j \leq L} (1 - \alpha)^{\abs{D \cap (B_i \cup B_j)}} + \cdots
\end{split}
\]
The right-hand side of the equation has the correct form, proving the lemma.
\end{proof}

We are now ready to state the main result of this subsection. Assume that we have a dataset $D$
and we create a new larger dataset $R$ by sampling transactions with
replacement from $D$. The dataset $R$ has the same characteristics as $D$, it
is only larger. Then if we have two itemsets $X$ and $Y$ such that $X
\prec_\sigma Y$, then on average we will have $\sm{X; \sigma, R, \alpha} <
\sm{Y; \sigma, R, \alpha}$ for \emph{any} $\alpha > 0$ assuming that $\abs{R}$
is large enough.

\begin{proposition}
\label{prop:asympt}
Let $\sigma$ be a monotone CNF predicate and let $D$ be a dataset. Let $X$ and $Y$
be itemsets such that $X \prec_\sigma Y$ in $D$. Let $q$ be the empirical distribution
of $D$ and let $R_m$ be a dataset of $m$ random transactions drawn from $q$.
Let $0 < \alpha < 1$.  Then there is $M$ such that
\[
	\mean{\sm{X; \sigma, R_m, \alpha}} < \mean{\sm{Y; \sigma, R_m, \alpha}}
\]
for $m > M$.
\end{proposition}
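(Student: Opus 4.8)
The plan is to exploit Lemma~\ref{lemma:cnfform}, which applies verbatim to the random dataset $R_m$ because the clause sets $B_i$ (and hence the derived sets $S_i$ and coefficients $c_i$ produced by the inclusion--exclusion expansion) depend only on the itemset and on the number of items $K$, not on the dataset itself. Thus for a fixed itemset there are sets of binary vectors $\set{S_i}$ and coefficients $\set{c_i}$, common to $D$ and to $R_m$, with
\[
	\sm{X; \sigma, R_m, \alpha} = \sum_i c_i (1 - \alpha)^{\abs{R_m \cap S_i}}\quad.
\]
First I would take expectations. Since $R_m$ consists of $m$ independent draws from $q$, for any fixed set $S$ the count $\abs{R_m \cap S}$ is a sum of $m$ independent indicators, each equal to $1$ with probability $q(S) = \abs{D \cap S}/\abs{D}$. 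A one-line computation then gives $\mean{(1 - \alpha)^{\abs{R_m \cap S}}} = (1 - \alpha\, q(S))^m$, so by linearity
\[
	\mean{\sm{X; \sigma, R_m, \alpha}} = \sum_i c_i (1 - \alpha\, q(S_i))^m\quad.
\]

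Next I would form the difference $\Delta_m = \mean{\sm{Y; \sigma, R_m, \alpha}} - \mean{\sm{X; \sigma, R_m, \alpha}}$ and group the terms of both expansions according to the value $r = q(S_i)$. This yields $\Delta_m = \sum_r d_r (1 - \alpha r)^m$, a finite sum over the distinct values $r \in [0,1]$ taken by $q$ on the relevant sets, where $d_r$ collects the net coefficient at level $r$. The crucial observation is that exactly these same coefficients $d_r$ govern the ranking in $D$ for $\alpha$ near $1$: writing $\abs{D \cap S} = r\abs{D}$ in the two expansions of Lemma~\ref{lemma:cnfform} over $D$ gives
\[
	\sm{Y; \sigma, D, \alpha} - \sm{X; \sigma, D, \alpha} = \sum_r d_r (1 - \alpha)^{r\abs{D}}\quad,
\]
a polynomial in $1 - \alpha$ whose term of degree $r\abs{D}$ carries the coefficient $d_r$. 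Because $X \prec_\sigma Y$, this polynomial is nonzero and nonnegative for $\alpha$ near $1$, so by Lemma~\ref{lem:coeff} its lowest-degree nonzero coefficient is strictly positive; that is, if $r^*$ is the smallest $r$ with $d_r \neq 0$, then $d_{r^*} > 0$.

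Finally I would read off the asymptotics of $\Delta_m$. Since $0 < \alpha < 1$ and $0 \leq r \leq 1$, every factor $1 - \alpha r$ lies in $(0, 1]$ and is strictly decreasing in $r$; hence among the surviving terms the one with the smallest $r$, namely $r^*$, decays slowest (or, when $r^* = 0$, stays constant at $1$). Dividing by $(1 - \alpha r^*)^m$ then shows $\Delta_m / (1 - \alpha r^*)^m \to d_{r^*} > 0$, so $\Delta_m > 0$ for all sufficiently large $m$, which is the claim. I expect the main obstacle to be the bookkeeping in the two middle steps: first the clean evaluation of $\mean{(1-\alpha)^{\abs{R_m \cap S}}}$ as $(1 - \alpha\, q(S))^m$, and then the identification that a single family of coefficients $\set{d_r}$ simultaneously determines the $\alpha \to 1$ order on $D$ (through the exponents $r\abs{D}$) and the $m \to \infty$ order of the expected robustness on $R_m$ (through the bases $1 - \alpha r$); once this correspondence is established, Lemma~\ref{lem:coeff} does the rest.
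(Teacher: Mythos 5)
Your proposal is correct and follows essentially the same route as the paper's proof: apply Lemma~\ref{lemma:cnfform} to $R_m$, evaluate $\mean{(1-\alpha)^{\abs{R_m\cap S}}}$ as $(1-\alpha\,q(S))^m$ via the binomial generating function, group coefficients by $\abs{D\cap S_i}$ (your $r$ is just the paper's $k$ divided by $\abs{D}$), use $X\prec_\sigma Y$ together with Lemma~\ref{lem:coeff} to get strict positivity of the leading grouped coefficient, and let the slowest-decaying term dominate as $m\to\infty$. Your explicit remark that the clause sets $B_i$, and hence the $S_i$ and $c_i$, depend only on the itemset and $K$ rather than on the dataset is a point the paper leaves implicit, but it does not change the argument.
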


\begin{proof}
Let us write $\beta = 1 - \alpha$. Lemma~\ref{lemma:cnfform} says that we can write the difference in robustness
as
\[
	\sm{Y; \sigma, R_m, \alpha} - \sm{X; \sigma, R_m, \alpha} = \sum_{i = 1}^N c_i \beta^{\abs{R_m \cap S_i}}
\]
for certain coefficients $\set{c_i}_1^N$ and sets of binary vectors $\set{S_i}_1^N$.
Let $d_k = \sum_{\abs{D \cap S_i} = k} c_i$. Since $X \prec_\sigma Y$,
Lemma~\ref{lem:coeff} implies that there is $l$ such that $d_l > 0$ and $d_k =
0$ for $k < l$.

Let $S$ be a set of binary transactions, and let $k = \abs{S \cap D}$, that is, the probability of generating a random transaction belonging to $S$ is $q(t \in S) = k / \abs{D}$.
We have
\[
	\mean{\beta^{\abs{S \cap R_m}}} = \sum_{j = 0}^m \beta^j {m \choose j} q(t \in S)^j(1 - q(t \in S))^{m - j} = \pr{\beta \frac{k}{\abs{D}} + 1 - \frac{k}{\abs{D}}}^m\quad.
\]
We will write $t_k$ as shorthand for the right-side hand of the equation. Note
that since $\beta < 1$, we have $t_{k + 1} < t_k$. We can write the expected difference between robustness as
\[
	\mean{\sum_{i = 1}^N c_i \beta^{\abs{R_m \cap S_i}}} = \sum_{k = 0}^{\abs{D}} d_k t_k^m = d_lt_l^m + \sum_{k = l + 1}^{\abs{D}} d_k t_k^m = t_l^m \big(d_l + \sum_{k = l + 1}^{\abs{D}} d_k (t_k / t_l)^m\big)\quad.
\]
Since $t_k / t_l < 1$, the terms $(t_k / t_l)^m$ approach $0$ as $m$ goes to
infinity. Hence, there is $M$ such that the sum in the right-hand side of the
equation is larger than $-d_l$ for $m > M$.  This guarantees that the
difference is positive proving the proposition.
\end{proof}

\rt{
This proposition suggests that ranking based on a fixed $\alpha$ and a
parameter-free ranking will eventually agree if the dataset is large enough. In
other words, $\beta$ in Proposition~\ref{prop:rankok} will get smaller (on
average) as the size of the dataset increases. We will see this phenomenon
later on in Propositions~\ref{prop:freerankbeta}~and~\ref{prop:tsrankbeta}.}

\section{Computing order in practice}
\label{sec:orderpractice}
In this section we demonstrate how we can compute the ranking for free,
non-derivable, and totally shattered itemsets and how we can estimate the
ranking for closed itemsets.  For computational complexity see
Table~\ref{tab:times}.

\subsection{Free and totally shattered itemsets}
In this section we will demonstrate that we can compute the order for free and
totally shattered itemsets without finding an appropriate $\alpha$.  We will do this by
analyzing the coefficients of the measure viewed as a polynomial of $1 - \alpha$.

\rt{
Note that for free and totally shattered itemsets these polynomials are
given in Proposition~\ref{prop:freeanalytic} and Proposition~\ref{prop:tsanalytic}.
In order to obtain the coefficients of the polynomial we can simply expand the polynomials.
However,} the polynomials in Proposition~\ref{prop:freeanalytic} and Proposition~\ref{prop:tsanalytic}
are regular enough so that we can compute the order without
expanding the polynomials. In order to do so we need the following definition
for ordering sequences.

\begin{definition}
\label{def:biblio}
Given two non-decreasing sequences $s = s_1, \ldots, s_K$ and $t = t_1, \ldots, t_N$,
we write $s \prec t$ if either there is $s_n < t_n$ and $s_i = t_i$ for all $i < n$
or $t$ is a proper prefix sequence of $s$, that is, $s_i = t_i$ for $i \leq N < K$.
We write $s \preceq t$, if $s = t$ or $s \prec t$.
\end{definition}

The following proposition will allow us to order itemsets without expanding the
polynomials in Propositions~\ref{prop:freeanalytic}--\ref{prop:tsanalytic}.

\begin{proposition}
\label{prop:order}
Assume two polynomials
\[
	f(\alpha) = \prod_{i = 1}^K (1 - (1 - \alpha)^{s_i}) \quad\text{and}\quad g(\alpha) = \prod_{i = 1}^N (1 - (1 - \alpha)^{t_i}),
\]
where $s = s_1, \ldots, s_K$ and $t = t_1, \ldots, t_N$ are non-decreasing sequences
of integers, $s_i, t_i \geq 0$.  If $t \preceq s$, then
there is a $\beta < 1$ such that $\beta \leq \alpha \leq 1$ implies $f(\alpha) \geq g(\alpha)$.
\end{proposition}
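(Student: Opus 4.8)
The plan is to substitute $x = 1 - \alpha$ and analyze the difference $h(x) = f - g$ as a polynomial in $x$ near $x = 0$, then invoke Lemma~\ref{lem:coeff}: it suffices to show that the lowest-degree nonzero coefficient of $h$ is positive (the case $h \equiv 0$, i.e.\ $s = t$, gives $f = g$ and is trivial). Writing $f = \prod_{i=1}^K (1 - x^{s_i})$ and $g = \prod_{i=1}^N (1 - x^{t_i})$, the coefficient of $x^d$ in $f$ is the signed count $\sum_I (-1)^{\abs{I}}$ over subsets $I \subseteq \enset{1}{K}$ with $\sum_{i \in I} s_i = d$, and similarly for $g$. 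Rather than expand these, I would locate the first degree at which the two coefficient sequences diverge.

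First I would translate the sequence order $t \preceq s$ into a statement about the \emph{multisets} of exponents. Let $m_S(d)$ and $m_T(d)$ denote the number of indices with $s_i = d$ and $t_j = d$, and let $d_0$ be the smallest value with $m_S(d_0) \neq m_T(d_0)$. I claim that $t \prec s$ forces $m_T(d_0) > m_S(d_0)$, splitting into the two cases of Definition~\ref{def:biblio}. If $t$ and $s$ share the prefix $t_i = s_i$ for $i < n$ and $t_n < s_n$, then every exponent strictly below $t_n$ lies in this common prefix (all later $s_i \geq s_n > t_n$ and all later $t_j \geq t_n$), so multiplicities agree below $t_n$; at $t_n$ itself the prefix contributes equally to both sides while index $n$ (and any ties above it) adds at least one unit to $m_T$, giving $d_0 = t_n$ and $m_T(d_0) > m_S(d_0)$. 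If instead $s$ is a proper prefix of $t$, then $T$ is $S$ augmented by the extra elements $t_{K+1}, \ldots, t_N$, all $\geq s_K$, so $d_0$ is the smallest extra value and again $m_T(d_0) > m_S(d_0)$.

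The heart of the argument is to show that this multiset comparison in fact controls the lowest-order coefficient of $h$, even when $d_0$ is large enough that products of several exponents contribute. The key observation is that any subset of $S$ with sum $d \leq d_0$ consists entirely of elements $< d_0$ unless it is the singleton $\set{d_0}$; and since the sub-multisets of $S$ and $T$ restricted to values $< d_0$ coincide (multiplicities agree below $d_0$ by choice of $d_0$), the signed contributions of all such multi-element subsets are \emph{identical} for $f$ and $g$. Consequently $[x^d] h = 0$ for every $d < d_0$, and at $d = d_0$ only the singleton terms survive the cancellation, yielding $[x^{d_0}] h = m_T(d_0) - m_S(d_0) > 0$. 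Lemma~\ref{lem:coeff} then supplies a $\beta' > 0$ with $h(x) \geq 0$ for $0 \leq x \leq \beta'$, which is precisely $f(\alpha) \geq g(\alpha)$ for $1 - \beta' \leq \alpha \leq 1$.

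I expect the main obstacle to be exactly this last reduction: arguing cleanly that the combinatorially messy higher-subset contributions cancel between $f$ and $g$ up to and including degree $d_0$, so that the whole comparison collapses to the single multiplicity difference $m_T(d_0) - m_S(d_0)$. Throughout I would assume the exponents are positive, which holds in the intended application to supports of free and totally shattered itemsets; a zero exponent makes the corresponding factor $1 - x^0$ vanish identically, a degenerate case that is handled separately and trivially.
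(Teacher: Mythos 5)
Your proof is correct and follows essentially the same route as the paper: expand $f - g$ as a polynomial in $x = 1 - \alpha$ whose coefficients are signed counts of subsequences with a given sum, observe that all contributions from subsequences drawn from values below the critical degree cancel because the two multisets of exponents agree there, so the first nonzero coefficient reduces to the difference in singleton counts ($B - A$ in the paper's notation, $m_T(d_0) - m_S(d_0)$ in yours), and finish with Lemma~\ref{lem:coeff}. The only cosmetic difference is that you fold the prefix case into the same coefficient analysis, whereas the paper dispatches it directly via $g(\alpha) = f(\alpha)\prod_{i=K+1}^{N}\bigl(1 - (1-\alpha)^{t_i}\bigr) \leq f(\alpha)$.
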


\begin{proof}
The case $s = t$ is trivial. Hence we assume that $s \neq t$.
If $s_1 = 0$ or $t_1 = 0$, then $f(\alpha) = 0$ or $g(\alpha) = 0$, and the result follows, hence we will assume that $s_i, t_i > 0$.

Let $\set{a_i}$ and $\set{b_i}$ be coefficients such that
\[
	f(\alpha) = \sum_{i} a_i(1 - \alpha)^i \quad\text{and}\quad g(\alpha) = \sum_{i} b_i(1 - \alpha)^i \quad.
\]
Let ${I}_n$ be the collection of all subsequences of $s$
that sum to $n$,
\[
	{I}_n = \big\{u \mid u \text{ is a subsequence of } s,\  \sum_{i = 1}^{\abs{u}} u_i = n\big\}\quad.
\]
Similarly, let ${J}_n$ be the collection of all subsequences of $t$
that sum to $n$. We can rewrite $f(\alpha)$ as

\[
	f(\alpha) = \sum_{{\scriptstyle u \text{ is a}}\atop{\scriptstyle\text{subseq. of } s}} (-1)^{\abs{u}}(1 - \alpha)^{\sum_i u_i}
\]
which implies that
\[
	a_n = \sum_{u \in {I}_n} (-1)^{\abs{u}} \quad\text{and similarly}\quad
	b_n = \sum_{u \in {J}_n} (-1)^{\abs{u}}\quad.
\]

Assume that $t \prec s$. If $s$ is a prefix sequence of $t$, then
\[
	g(\alpha) = f(\alpha)\prod_{i = {K + 1}}^N (1 - (1 - \alpha)^{t_i}) \leq f(\alpha),
\]
which proves the proposition. Let $n$ be as given in Definition~\ref{def:biblio}.
For every $i < t_n < s_n$, the subsequences in ${I}_i$ and
${J}_i$ contain subsequences from $s$ and $t$ with indices smaller than $n$.
Since $s$ and $t$ are identical up to $n$, then it follows that ${I}_i =
{J}_i$ and consequently $a_i = b_i$. Let $u \in {I}_{t_n}$.
Assume that
$\abs{u} > 1$.  Since, we assume that $s_i > 0$, $u$ is a subsequence of $s_1,\ldots,s_{n - 1}$.
This means that we will find the same subsequence in
${J}_{t_n}$.
Let $A$ be the number of singleton sequences in ${I}_{t_n}$, $A = \abs{\set{u \in I_{t_n} \mid \abs{u} = 1}}$,
and let $B$ be the number of singleton sequences in ${J}_{t_n}$. These singleton sequences
correspond to the entries in $s$ and $t$ having the same value as $t_n$.
Since $s$ and $t$ are identical up to $n$, $s$ does not contain $t_n$ after $s_n$, it holds that
$B > A$. We have now $a_{s_n} - b_{s_n} = B - A > 0$. Lemma~\ref{lem:coeff}
now implies that $f(1 - x) \geq g(1 - x)$, when $x$ is close to $0$.
Write $\alpha = 1 - x$ to complete the proof.
\end{proof}

The polynomials in Propositions~\ref{prop:freeanalytic}--\ref{prop:tsanalytic}
have the form used in Proposition~\ref{prop:order}. Consequently, we can use
the proposition to order itemsets. In order to do that we need the following
definitions.

\begin{definition}
Given a dataset $D$ and an itemset $X$, we define a \emph{free margin vector}
$\marvec{X; D, \pfree{}}$ to be the sequence of $\abs{X}$ integers $\supp{X = v; D}$,
where $v$ is a binary vector having $\abs{X} - 1$ ones, \emph{ordered} in the increasing order.

Similarly, we define a \emph{totally shattered margin vector}
$\marvec{X; D, \pts{}}$ to be a sequence of $2^\abs{X}$ integers $\supp{X = v; D}$
\emph{ordered} in the increasing order.
\end{definition}

\begin{corollary}
Given itemsets $X$ and $Y$ and a dataset $D$,
$X \preceq_{\pfree{}} Y$ if and only if $\marvec{X; D, \pfree{}} \preceq \marvec{Y; D, \pfree{}}$.
\end{corollary}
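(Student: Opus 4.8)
The plan is to read off the robustness of a free itemset as a polynomial of exactly the shape handled by Proposition~\ref{prop:order}, and then transport the sequence order $\preceq$ of Definition~\ref{def:biblio} to the itemset order $\preceq_{\pfree{}}$ of Definition~\ref{def:rank}. By Proposition~\ref{prop:freeanalytic} we have $\sm{X; \pfree{}, \alpha} = \prod_{v \in V}(1 - (1 - \alpha)^{\supp{X = v}})$, where $V$ ranges over the $\abs{X}$ vectors with exactly one $0$. Sorting the exponents $\supp{X = v}$ into non-decreasing order is by definition the free margin vector $\marvec{X; D, \pfree{}}$, and the value of the product is unchanged by this reordering. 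Hence $\sm{X; \pfree{}, \alpha}$ and $\sm{Y; \pfree{}, \alpha}$ are precisely the polynomials $f$ and $g$ of Proposition~\ref{prop:order} whose exponent sequences are $\marvec{X; D, \pfree{}}$ and $\marvec{Y; D, \pfree{}}$.

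For the ``if'' direction, I would assume $\marvec{X; D, \pfree{}} \preceq \marvec{Y; D, \pfree{}}$ and apply Proposition~\ref{prop:order} with $s = \marvec{Y; D, \pfree{}}$ and $t = \marvec{X; D, \pfree{}}$. Since $t \preceq s$, the proposition produces a $\beta < 1$ with $\sm{Y; \pfree{}, \alpha} \geq \sm{X; \pfree{}, \alpha}$ for all $\alpha \in [\beta, 1]$, which is exactly the statement $X \preceq_{\pfree{}} Y$ of Definition~\ref{def:rank}.

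For the converse I would argue by contraposition, using that both orders are total: $\preceq$ on non-decreasing sequences satisfies trichotomy (Definition~\ref{def:biblio}), and $\preceq_{\pfree{}}$ is a total order by Proposition~\ref{prop:rankok}. If $\marvec{X; D, \pfree{}} \not\preceq \marvec{Y; D, \pfree{}}$, then $\marvec{Y; D, \pfree{}} \prec \marvec{X; D, \pfree{}}$ strictly. Feeding this into the argument of the previous paragraph with the roles of $X$ and $Y$ exchanged gives $\sm{X; \pfree{}, \alpha} \geq \sm{Y; \pfree{}, \alpha}$ near $1$; to reach a contradiction with the assumed $X \preceq_{\pfree{}} Y$ I then need the \emph{strict} conclusion $Y \prec_{\pfree{}} X$, since otherwise the two inequalities only force $\sm{X} = \sm{Y}$ on an interval, hence identically, which is consistent.

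The main obstacle is precisely upgrading the inequality of Proposition~\ref{prop:order} to a strict one when $t \prec s$, and I expect this to come from the proof of that proposition rather than from its statement (the degenerate case where some support is $0$, i.e.\ an itemset is not genuinely free, is exactly the $s_1 = 0$ or $t_1 = 0$ case already disposed of there). In the prefix case one has $\sm{Y; \pfree{}, \alpha} = \sm{X; \pfree{}, \alpha}\prod_{i > K}(1 - (1 - \alpha)^{t_i})$ with every extra factor strictly below $1$ for $\alpha < 1$, so $\sm{X} > \sm{Y}$ on $[\beta, 1)$. In the remaining case the first differing coefficient satisfies $a_{s_n} - b_{s_n} = B - A > 0$, so Lemma~\ref{lem:coeff} makes the difference strictly positive for $\alpha$ just below $1$. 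Either way $\sm{X; \pfree{}, \alpha} > \sm{Y; \pfree{}, \alpha}$ for some $\alpha \geq \beta$, which yields $Y \prec_{\pfree{}} X$ and the desired contradiction, completing the converse and hence the corollary.
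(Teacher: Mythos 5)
Your argument is correct and follows the same route the paper intends: the paper gives no separate proof of this corollary, asserting only that the polynomial of Proposition~\ref{prop:freeanalytic} has the form required by Proposition~\ref{prop:order}, which is exactly your forward direction with $s = \marvec{Y; D, \pfree{}}$ and $t = \marvec{X; D, \pfree{}}$. Your additional care on the converse---noting that the statement of Proposition~\ref{prop:order} only yields a non-strict inequality, so that strictness must be extracted from its proof (the prefix case and the coefficient gap $B - A > 0$ fed into Lemma~\ref{lem:coeff}) and combined with totality of both orders---is precisely the detail the paper leaves implicit, and it goes through except in the degenerate situation you too quickly dismiss, namely when both $X$ and $Y$ fail to be free in $D$ (some margin support equals $0$): there both robustness polynomials vanish identically, the strict separation is unavailable, and indeed the ``only if'' direction of the corollary as literally stated fails, so this is a defect of the statement's scope rather than of your proof.
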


\begin{corollary}
Given itemsets $X$ and $Y$ and a dataset $D$,
	$X \preceq_{\pts{}} Y$ if and only if $\marvec{X; D, \pfree{}} \preceq \marvec{Y; D, \pts{}}$.
\end{corollary}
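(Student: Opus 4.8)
The plan is to recognize that the corollary is an immediate specialization of Proposition~\ref{prop:order}, once we identify the exponent sequence appearing in the robustness polynomial with the totally shattered margin vector. By Proposition~\ref{prop:tsanalytic} the robustness of a totally shattered itemset is
\[
	\sm{X; \pts{}, \alpha} = \orf{X, V, \alpha} = \prod_{v \in V}\bigl(1 - (1 - \alpha)^{\supp{X = v; D}}\bigr),
\]
where $V$ ranges over all $2^{\abs{X}}$ binary vectors of length $\abs{X}$. This is exactly the product form $f(\alpha) = \prod_i (1 - (1-\alpha)^{s_i})$ of Proposition~\ref{prop:order}, and if we sort the exponents $\supp{X = v; D}$ in increasing order the resulting non-decreasing integer sequence is, by definition, the totally shattered margin vector $\marvec{X; D, \pts{}}$. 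So the first step is simply to state this identification; no calculation is needed beyond invoking Proposition~\ref{prop:tsanalytic} and the definition of the margin vector.

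Next I would prove the direction $\marvec{X; D, \pts{}} \preceq \marvec{Y; D, \pts{}} \Rightarrow X \preceq_{\pts{}} Y$. To align conventions, set $s = \marvec{Y; D, \pts{}}$ and $t = \marvec{X; D, \pts{}}$, so that $f = \sm{Y; \pts{}, \alpha}$ and $g = \sm{X; \pts{}, \alpha}$ in the notation of Proposition~\ref{prop:order}. The hypothesis $t \preceq s$ is precisely $\marvec{X; D, \pts{}} \preceq \marvec{Y; D, \pts{}}$, and the conclusion of Proposition~\ref{prop:order} yields a $\beta < 1$ with $\sm{X; \pts{}, \alpha} \leq \sm{Y; \pts{}, \alpha}$ for all $\alpha \in [\beta, 1]$. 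By Definition~\ref{def:rank} this is exactly $X \preceq_{\pts{}} Y$.

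For the converse I would exploit totality. The sequence relation $\preceq$ of Definition~\ref{def:biblio} is a total order on non-decreasing sequences, and $\preceq_{\pts{}}$ is a total linear order (as noted after Definition~\ref{def:rank}), so it suffices to argue by trichotomy on the two margin vectors. If $\marvec{X; D, \pts{}} = \marvec{Y; D, \pts{}}$, the two robustness polynomials are literally identical, hence $\sm{X; \pts{}, \alpha} = \sm{Y; \pts{}, \alpha}$ for every $\alpha$ and both sides of the equivalence hold via equality. If $\marvec{X; D, \pts{}} \prec \marvec{Y; D, \pts{}}$ strictly, the coefficient computation inside the proof of Proposition~\ref{prop:order} gives a strictly positive leading coefficient of $\sm{Y;\pts{},\alpha} - \sm{X;\pts{},\alpha}$ as a polynomial in $1-\alpha$, so the inequality is strict near $1$ and $X \prec_{\pts{}} Y$; the reversed case is symmetric and rules out $X \preceq_{\pts{}} Y$. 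Combining the three cases gives the ``only if'' direction and completes the equivalence.

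The argument is essentially a repackaging of Proposition~\ref{prop:order}, so I expect no serious obstacle; the only points requiring care are keeping the direction of the inequalities consistent between $\preceq$ on sequences and $\preceq_{\pts{}}$ on itemsets, and noting that the strict part of the converse relies on the strictness of the leading coefficient in Proposition~\ref{prop:order} rather than its stated non-strict conclusion. This mirrors exactly the proof of the preceding corollary for free itemsets, the only change being that $V$ is the full set of $2^{\abs{X}}$ vectors instead of the $\abs{X}$ vectors with a single $0$.
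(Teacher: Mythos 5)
Your proposal is correct and follows essentially the same route as the paper, which offers no separate proof but derives both corollaries directly from Proposition~\ref{prop:order} by identifying the sorted exponents of the robustness polynomial from Proposition~\ref{prop:tsanalytic} with the totally shattered margin vector; your added trichotomy argument for the ``only if'' direction is a reasonable filling-in of what the paper leaves implicit. (You also correctly read $\marvec{X; D, \pfree{}}$ in the statement as a typo for $\marvec{X; D, \pts{}}$.)
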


\begin{example}
In our running example, $\supp{ab = [1, 0]} = 1$ and $\supp{ab = [0, 1]} = 2$,
hence the free margin vector is equal to $\marvec{ab; \pfree{}} = [1, 2]$.
Similarly, we have $\supp{ae = [1, 0]} = 1$ and  $\supp{ae = [0, 1]} = 3$,
hence the free margin vector is equal to $\marvec{ae; \pfree{}} = [1, 3]$.
Hence, we conclude that $ab \prec_{\pfree{}} ae$.
\end{example}

\rt{Margin vectors are useful to determine the order of robust itemsets.
However, we can also use them to provide a bound for $\beta$ given in Definition~\ref{def:rank}.
More specifically, the further the margin vectors are from each other
the lower $\alpha$ can be 
such that the robustness still agrees with the order. To make this formal, we will need the following
definition.}

\begin{definition}
Assume two non-decreasing sequences $s = s_1, \ldots, s_K$ and $t = t_1, \ldots, t_N$
such that $s \preceq t$. Let $n$ be the first index such that $s_n < t_n$, we
define $\diff{s, t} = t_n - s_n$. If no such such index exist, that is,
$t$ is a prefix sequence of $s$, we define $\diff{s, t} = \infty$.
\end{definition}

The following propositions state that the larger $\diff{s, t}$, the lower $\alpha$ can be.
This reflects the result of Proposition~\ref{prop:asympt}: large datasets will result in
large differences in margin vectors, allowing $\alpha$ to be small.

\begin{proposition}
\label{prop:freerankbeta}
Assume itemsets $X$ and $Y$ and a dataset $D$ such that $X \preceq_{\pfree{}}
Y$. Let $d = \diff{\marvec{X; D, \pfree{}}, \marvec{Y; D, \pfree{}}}$.
Then
\[
	\sm{X, \pfree{}, D, \alpha} \leq \sm{Y, \pfree{}, D, \alpha} \quad\text{for}\quad \alpha \geq 1 - \frac{1}{\sqrt[d]{\abs{Y} + 1}}\quad.
\]
\end{proposition}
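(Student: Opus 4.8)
The plan is to argue directly from the closed form of the free robustness given in Proposition~\ref{prop:freeanalytic}. Writing $s = \marvec{X; D, \pfree{}}$ and $t = \marvec{Y; D, \pfree{}}$ for the two increasingly sorted free margin vectors, of lengths $\abs{X}$ and $\abs{Y}$, that proposition gives
\[
	\sm{X; \pfree{}, D, \alpha} = \prod_{i=1}^{\abs{X}} \pr{1 - (1-\alpha)^{s_i}} \quad\text{and}\quad \sm{Y; \pfree{}, D, \alpha} = \prod_{i=1}^{\abs{Y}} \pr{1 - (1-\alpha)^{t_i}}.
\]
Substituting $x = 1 - \alpha$ and calling these two products $F(x)$ and $G(x)$, the claim becomes $G(x) \geq F(x)$ whenever $0 \leq x \leq 1/\sqrt[d]{\abs{Y}+1}$. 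Since $X \preceq_{\pfree{}} Y$ translates, via the margin-vector characterisation of $\preceq_{\pfree{}}$, into $s \preceq t$, the quantity $d = \diff{s, t}$ is well defined. If $d = \infty$, then $t$ is a prefix of $s$, so $F(x) = G(x)\prod_{i > \abs{Y}}(1 - x^{s_i}) \leq G(x)$ on $[0,1]$ and the bound (which then reads $\alpha \geq 0$) is immediate; so I would concentrate on the case of finite $d$.

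For finite $d$, let $p$ be the first index with $s_p < t_p$; by Definition~\ref{def:biblio} this forces $s_i = t_i$ for all $i < p$, with $d = t_p - s_p$, and also $p \leq \min(\abs{X}, \abs{Y})$ so that both tails below are nonempty. I would factor out the common prefix $P(x) = \prod_{i < p}(1 - x^{s_i})$, which is nonnegative on $[0,1]$, obtaining
\[
	G(x) - F(x) = P(x)\,\big(\tilde G(x) - \tilde F(x)\big), \quad \tilde F(x) = \prod_{i=p}^{\abs{X}}(1 - x^{s_i}), \quad \tilde G(x) = \prod_{i=p}^{\abs{Y}}(1 - x^{t_i}).
\]
It then suffices to show $\tilde G(x) \geq \tilde F(x)$, which isolates the effect of the gap $d$ at index $p$.

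Here I would apply two elementary estimates. Dropping all factors of $\tilde F$ except the first gives $\tilde F(x) \leq 1 - x^{s_p}$. For $\tilde G$, monotonicity of the sorted sequence ($t_i \geq t_p$ for $i \geq p$) yields $1 - x^{t_i} \geq 1 - x^{t_p} \geq 0$ for every tail factor, so $\tilde G(x) \geq (1 - x^{t_p})^{\abs{Y}-p+1}$, and Bernoulli's inequality turns this into $\tilde G(x) \geq 1 - (\abs{Y}-p+1)x^{t_p}$. Combining the two bounds and using $t_p - s_p = d$,
\[
	\tilde G(x) - \tilde F(x) \geq x^{s_p} - (\abs{Y}-p+1)x^{t_p} = x^{s_p}\big(1 - (\abs{Y}-p+1)x^{d}\big),
\]
which is nonnegative precisely when $(\abs{Y}-p+1)x^{d} \leq 1$. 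Because $p \geq 1$ gives $\abs{Y}-p+1 \leq \abs{Y}+1$, the hypothesis $x \leq 1/\sqrt[d]{\abs{Y}+1}$ (that is, $x^d \leq 1/(\abs{Y}+1)$, i.e.\ $\alpha \geq 1 - 1/\sqrt[d]{\abs{Y}+1}$) is sufficient; the boundary $x = 0$ holds trivially since then $F = G = 1$.

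This is in effect a quantitative refinement of Proposition~\ref{prop:order}, which only asserts existence of some $\beta$. I expect the analytic content to be light — a single use of Bernoulli's inequality is what produces the $d$-th root in the threshold — so the only delicate points are bookkeeping: verifying that $s_i = t_i$ strictly below $p$ so the prefix $P(x)$ genuinely cancels, and that increasing sortedness lets me lower-bound every tail factor of $\tilde G$ uniformly by $1 - x^{t_p}$. The mild slack between $\abs{Y}-p+1$ and $\abs{Y}+1$ is harmless and is what lets the final bound be stated cleanly in terms of $\abs{Y}$ alone.
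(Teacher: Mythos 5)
Your proof is correct and follows essentially the same route as the paper's: both factor out the common prefix of the two margin vectors, upper-bound the $X$-product by its first differing factor $1-(1-\alpha)^{s_p}$, lower-bound the $Y$-product by a power of $1-(1-\alpha)^{t_p}$, and reduce to the condition $(\abs{Y}+1)(1-\alpha)^d\leq 1$. The only (harmless) differences are that the paper proves this as a standalone statement about arbitrary non-decreasing integer sequences and closes the argument with the logarithm bounds $-x/(1-x)\leq\log(1-x)\leq -x$ where you use Bernoulli's inequality, and that your exponent $\abs{Y}-p+1$ is marginally tighter than the paper's $\abs{Y}$.
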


\begin{proposition}
\label{prop:tsrankbeta}
Assume itemsets $X$ and $Y$ and a dataset $D$ such that $X \preceq_{\pts{}}
Y$. Let $d = \diff{\marvec{X; D, \pts{}}, \marvec{Y; D, \pts{}}}$.
Then
\[
	\sm{X, \pfree{}, D, \alpha} \leq \sm{Y, \pfree{}, D, \alpha} \quad\text{for}\quad \alpha \geq 1 - \frac{1}{\sqrt[d]{2^\abs{Y} + 1}}\quad.
\]
\end{proposition}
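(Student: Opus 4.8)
The plan is to mirror the argument for the free case in Proposition~\ref{prop:freerankbeta}, the only change being that the free margin vector with $\abs{Y}$ entries is replaced by the totally shattered margin vector with $2^{\abs{Y}}$ entries. By Proposition~\ref{prop:tsanalytic} the two robustness values are exactly the polynomials of Proposition~\ref{prop:order}: writing $x = 1 - \alpha$ and letting $s = \marvec{X; D, \pts{}}$ and $t = \marvec{Y; D, \pts{}}$ be the sorted support sequences of lengths $2^{\abs{X}}$ and $2^{\abs{Y}}$, we have
\[
	\sm{X; \pts{}, \alpha} = \prod_i \pr{1 - x^{s_i}} \quad\text{and}\quad \sm{Y; \pts{}, \alpha} = \prod_i \pr{1 - x^{t_i}}\quad.
\]
Since $X \preceq_{\pts{}} Y$ is equivalent to $s \preceq t$, I would first treat the degenerate situations: if some support vanishes then the smallest entry of $s$ is $0$, a factor $1 - x^0 = 0$ appears, $\sm{X; \pts{}, \alpha} = 0$, and the claim is immediate; and if $t$ is a proper prefix of $s$ (so $\diff{s,t} = \infty$ and the stated threshold is $0$), then $\sm{X; \pts{}, \alpha}$ equals $\sm{Y; \pts{}, \alpha}$ times extra factors in $[0,1]$, giving the inequality for every $\alpha$. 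In the main case let $n$ be the first index with $s_n < t_n$, so that $s_i = t_i$ for $i < n$ and $d = t_n - s_n$, and assume $s_n \geq 1$.

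Next I would factor out the common prefix $P(x) = \prod_{i < n}\pr{1 - x^{s_i}} \geq 0$, which is identical for both sequences. Dropping the tail factors of $\sm{X; \pts{}, \alpha}$, each of which lies in $[0,1]$, gives the upper bound $\sm{X; \pts{}, \alpha} \leq P(x)\pr{1 - x^{s_n}}$. For $\sm{Y; \pts{}, \alpha}$ every remaining factor satisfies $1 - x^{t_i} \geq 1 - x^{t_n}$ because $t_i \geq t_n$ and $x \in [0,1]$, and there are at most $2^{\abs{Y}}$ of them, so
\[
	\sm{Y; \pts{}, \alpha} \geq P(x)\pr{1 - x^{t_n}}^{2^{\abs{Y}}}\quad.
\]
Since $P(x) \geq 0$, it therefore suffices to establish the single inequality $\pr{1 - x^{t_n}}^{2^{\abs{Y}}} \geq 1 - x^{s_n}$.

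The crux is this last inequality. Writing $t_n = s_n + d$ and applying Bernoulli's inequality with exponent $M = 2^{\abs{Y}}$ gives $\pr{1 - x^{t_n}}^{M} \geq 1 - M x^{s_n} x^{d}$, and since $s_n \geq 1$ implies $x^{s_n} > 0$ for $x > 0$, the bound $1 - M x^{s_n} x^{d} \geq 1 - x^{s_n}$ reduces to $M x^d \leq 1$, i.e.\ $x \leq 1/\sqrt[d]{2^{\abs{Y}}}$. The hypothesis $\alpha \geq 1 - 1/\sqrt[d]{2^{\abs{Y}} + 1}$ means $x \leq 1/\sqrt[d]{2^{\abs{Y}}+1} \leq 1/\sqrt[d]{2^{\abs{Y}}}$, so the condition holds and the argument closes. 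I expect the main obstacle to be getting the factor counting right and choosing a clean closing inequality: Bernoulli naturally yields the threshold with $2^{\abs{Y}}$, and the slightly looser $2^{\abs{Y}} + 1$ stated in the proposition (which only shrinks the admissible interval) then follows a fortiori.
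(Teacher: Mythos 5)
Your proof is correct and follows essentially the same route as the paper's: both reduce to the generic sequence inequality by factoring out the common prefix $\prod_{i<n}(1-\beta^{s_i})$, dropping the tail factors of the product for $X$, and lower-bounding the tail of the product for $Y$ by $\pr{1-\beta^{t_n}}^{2^{\abs{Y}}}$, leaving the single scalar inequality $\pr{1-\beta^{s_n+d}}^{N} \geq 1-\beta^{s_n}$. The only difference is that you close this last step with Bernoulli's inequality where the paper uses the bounds $-x \leq \log(1-x) \leq -x/(1-x)$; your variant is equally valid and in fact yields the marginally sharper threshold $\alpha \geq 1 - 1/\sqrt[d]{2^{\abs{Y}}}$, from which the stated bound follows a fortiori.
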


Both propositions follow immediately from the following proposition.

\begin{proposition}
Given two non-decreasing sequences $s = s_1, \ldots, s_K$ and $t = t_1, \ldots, t_N$ such that $s \preceq t$,
let $d = \diff{s, t}$.
Then
\[
	\prod_{i = 1}^K (1 - (1 - \alpha)^{s_i}) \leq \prod_{i = 1}^N (1 - (1 - \alpha)^{t_i}) \quad\text{for}\quad \alpha \geq 1 - \frac{1}{\sqrt[d]{N + 1}}\quad.
\]
\end{proposition}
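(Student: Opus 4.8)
The plan is to substitute $x = 1 - \alpha$ and prove the equivalent statement that $\prod_{i=1}^K(1 - x^{s_i}) \le \prod_{i=1}^N (1 - x^{t_i})$ whenever $x^d \le 1/(N+1)$, since the hypothesis $\alpha \ge 1 - 1/\sqrt[d]{N+1}$ is exactly $x^d \le 1/(N+1)$. Write $f(x)$ and $g(x)$ for the left- and right-hand products. Two degenerate cases are disposed of at once: if some $s_i = 0$ then $f$ carries the factor $1 - x^0 = 0$ and the claim is trivial, so I may assume every entry is positive; and if $s = t$ then equality holds. This leaves the two situations described by $s \prec t$ in Definition~\ref{def:biblio}. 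In the case where $t$ is a proper prefix of $s$ (so $\diff{s, t} = \infty$ and the threshold degenerates to $\alpha \ge 0$), I would write $f(x) = g(x)\prod_{i=N+1}^K (1 - x^{s_i})$ and note that the trailing product lies in $[0,1]$, giving $f \le g$ for every $x \in [0,1]$.

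The main case is when there is a first index $n$ with $s_n < t_n$ and $s_i = t_i$ for $i < n$, so that $d = t_n - s_n$. Here I would factor out the common prefix $P = \prod_{i<n}(1 - x^{s_i}) = \prod_{i<n}(1-x^{t_i}) \ge 0$; since $P \ge 0$, it suffices to establish $\prod_{i=n}^K (1 - x^{s_i}) \le \prod_{i=n}^N (1 - x^{t_i})$. I would then bound the two sides separately. For the left side, dropping every factor with $i > n$ (each lying in $[0,1]$) gives the crude bound $\prod_{i=n}^K (1-x^{s_i}) \le 1 - x^{s_n}$. For the right side I would apply the Weierstrass product inequality $\prod(1 - a_i) \ge 1 - \sum a_i$ with $a_i = x^{t_i} \in [0,1]$ to obtain $\prod_{i=n}^N (1 - x^{t_i}) \ge 1 - \sum_{i=n}^N x^{t_i}$.

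It then remains to show $\sum_{i=n}^N x^{t_i} \le x^{s_n}$, and this is exactly where the threshold is consumed. Because $t$ is non-decreasing, $t_i \ge t_n = s_n + d$ for every $i \ge n$, so $x^{t_i} \le x^{s_n}x^d$; since the sum has at most $N$ terms (as $n \ge 1$), I get $\sum_{i=n}^N x^{t_i} \le N x^{s_n} x^d$. The hypothesis $x^d \le 1/(N+1)$ forces $N x^d \le N/(N+1) < 1$, whence the sum is strictly below $x^{s_n}$. Chaining the three estimates yields $\prod_{i=n}^K (1-x^{s_i}) \le 1 - x^{s_n} \le 1 - \sum_{i=n}^N x^{t_i} \le \prod_{i=n}^N (1 - x^{t_i})$, and multiplying through by $P \ge 0$ completes the argument.

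The step demanding the most care is this final bookkeeping. Replacing the left-hand product by the single factor $1 - x^{s_n}$ discards a great deal, so the Weierstrass lower bound on the right must be strong enough to compensate; it is precisely the count of at most $N$ summands, together with the gap $d$ separating $s_n$ from $t_n$, that pins down the threshold as $1 - 1/\sqrt[d]{N+1}$. One should double-check that in this case $n \le \min(K,N)$ so that both $s_n$ and $t_n$ are defined, which is guaranteed by the first clause of Definition~\ref{def:biblio}, and that the slack between $N$ and $N+1$ is what keeps the middle inequality valid at the boundary value of $\alpha$.
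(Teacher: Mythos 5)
Your proof is correct and follows essentially the same route as the paper's: both factor out the common prefix, bound the left product above by the single factor $1 - x^{s_n}$, and bound the right product below using $t_i \geq s_n + d$ together with the hypothesis $(N+1)x^d \leq 1$. The only difference is the final elementary step, where you invoke the Weierstrass product inequality $\prod (1-a_i) \geq 1 - \sum a_i$ in place of the paper's logarithm estimates, which is a cosmetic variation.
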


\begin{proof}
If $t$ is a prefix sequence of $s$, then the inequality holds for any $\alpha$.
Assume that $t$ is not a prefix sequence and let $n$ be the first index such
that $s_n < t_n$.
Write $\beta = 1 - \alpha$.  We can upper bound the left-hand side by
\[
	\prod_{i = 1}^K (1 - \beta^{s_i}) \leq (1 - \beta^{s_n})\prod_{i = 1}^{n - 1} (1 - \beta^{s_i})
\]
and lower bound the right-hand side by
\[
	\prod_{i = 1}^N (1 - \beta^{t_i}) \geq (1 - \beta^{s_n + d})^N \prod_{i = 1}^{n - 1} (1 - \beta^{s_i})\quad.
\]
Hence it is sufficient to show that
\[
	(1 - \beta^{s_n}) \leq (1 - \beta^{s_n + d})^N \quad\text{or}\quad \log(1 - \beta^{s_n}) \leq N \log(1 - \beta^{s_n + d})\quad.
\]
We apply the inequalities $-x \leq \log (1 - x) \leq -x /(1 - x)$ which gives us
\[
	-\beta^{s_n} \leq -N\frac{\beta^{s_n + d}}{(1 - \beta^{s_n + d})} \quad\text{or}\quad 1 \geq N\beta^d + \beta^{s_n + d} \quad.
\]
Since $\beta^d \geq \beta^{s_n + d}$ it is sufficient to have $1 \geq (1 + N)\beta^d$.
This is true for $\beta \leq \frac{1}{\sqrt[d]{N + 1}}$.
\end{proof}

\subsection{Closed itemsets}
\label{sec:orderclosed}
In this section we will introduce a technique for estimating the ranking for
closed itemsets. As the measure for closed itemsets has a different form than
for free or totally shattered itemsets we are forced to seek for alternative
approaches. \rt{We approach the problem by first expressing the coefficients of
the polynomial with supports of closed itemsets. Then we estimate the
polynomial by considering only the most frequent closed itemsets.}

Let us consider Proposition~\ref{prop:closedanalytic}.  Let $a_k$ be the
coefficient for the $k$th term of the polynomial for $\sm{X; \pclos{}, \alpha}$
given in Proposition~\ref{prop:closedanalytic}.  If we can compute these
numbers efficiently, we can use Lemma~\ref{lem:coeff} to find the ranking.

We will do this by first expressing $a_k$ using closed itemsets. In order to do
that let $\cl{X}$ be the closure of an itemset $X$.  Let us define
\[
	e(Y, X) = \sum_{Z \supseteq X, \atop \cl{Z} = Y} (-1)^{\abs{Z} + \abs{X}}
\]
to be the alternating sum over all itemsets containing $X$ and having $Y$ as their closure.
Since all the itemsets having the same closure will have the same support we can
write the coefficients $a_k$ using $e(Y, X)$,
\begin{equation}
\label{eq:closedcoeff}
	a_k = \ \sum_{\mathclap{Y \supseteq X, \atop \supp{X} - \supp{Y}  = k}} \  (-1)^{\abs{Y} + \abs{X}}
	= \ \sum_{\mathclap{Y \supseteq X, Y = \cl{Y} \atop \supp{X} - \supp{Y}  = k}} \ e(Y, X)\quad.
\end{equation}

To compute $e(Y, X)$, first note that $e(X, X) = 1$. If $Y \neq X$, then using
the following identity
\[
	\sum_{Y \supseteq Y' \supseteq X \atop Y' = \cl{Y'}} e(Y', X) = \sum_{Y \supseteq Z \supseteq X} (-1)^{\abs{Z} + \abs{X}} = 0
\]
we arrive to
\begin{equation}
\label{eq:closedupdate}
	e(Y, X) = -\sum_{\mathclap{Y \supsetneq Y' \supseteq X \atop Y' = \cl{Y'}}} e(Y', X)\quad.
\end{equation}

Thus, we can compute $e(Y, X)$ from $e(Y', X)$, where $Y'$ is a closed subset of $Y$.
This is convenient, because when computing $e(Y, X)$, say for $a_k$, we
have already computed all the subsets of $Y$ for previous coefficients.

\begin{example}
Consider itemset $e$ in our running example. There are two closed supersets of
$e$, namely $bde$ and $abcde$, having the supports $4$ and $2$, respectively.
Using the update equations, we see that $e(e, e) = 1$, $e(bde, e) = -1$, and
$e(abcde, e) = 0$.  As $\supp{e} = 5$, we see that the non-zero coefficients
${a_i}$ are $a_0 = 1$ and $a_1 = -1$.
\end{example}

The problem with this approach is that we can still have an exponential number
of closed itemsets. Hence, we chose to estimate the ranking by only using
\emph{frequent} closed itemsets and estimate the remaining itemsets to have a
support of $0$.

This estimation is achieved by removing all closed non-frequent itemsets from
the sums of Eqs.~\ref{eq:closedcoeff}~and~\ref{eq:closedupdate} and adding an
itemset containing all the items and having the support $0$. The code for this
estimation is given in Algorithm~\ref{alg:closedestimate}.

\begin{algorithm}[htb!]
\Input{$X$ an itemset, $\ifam{C}$, frequent closed itemsets}
\Output{$\set{a_k}$, coefficients of the polynomial}
\lIf{$A \notin \ifam{C}$} {
	add $A$ to $\ifam{C}$ with $\supp{A} = 0$
}
$\ifam{C} \define \set{Y \in \ifam{C} \mid X \subseteq Y}$ \;
$\ifam{L} \define $ sets in $\ifam{C}$ ordered by the subset relation\;
$e(X, X) \define 1$\;
\For {$Y \in \ifam{L}$} {
	$e(Y, X) \define - \sum_{Z \in \ifam{C}, Z \subsetneq Y} e(Z, X)$\;
	$k \define \supp{X} - \supp{Y}$\;
	$a_k \define a_k + e(Y, X)$\;
}
\caption{Algorithm for estimating coefficients of the polynomial given in Proposition~\ref{prop:closedanalytic}.}
\label{alg:closedestimate}
\end{algorithm}

Algorithm~\ref{alg:closedestimate} takes $O(\abs{\ifam{C}}^2)$ time.  In
practice, this is much faster because an average itemset does not have that
many supersets.

Now that we have a way of estimating $a_k$ from frequent closed itemsets, we
can, given two itemsets $X$ and $Y$, search the smallest $k$ for which the
coefficients differ in order to apply Lemma~\ref{lem:coeff}. Note that if the
index of the differing coefficient, say $k$, is such that $\supp{X} - k$ is
larger or equal to the support threshold, then $a_k$ is correctly computed by
our estimation, and our approximation yields a correct ranking.

\subsection{Non-derivable itemsets}
In this section we will discuss how to compute the ranking non-derivable
itemsets.  The ranking for non-derivable is particularly difficult because we
cannot use Proposition~\ref{prop:order} to avoid expanding the polynomial given
in Proposition~\ref{prop:ndianalytic}. We can, however, expand the polynomial
since, due to Eq.~\ref{eq:prob}, it only has $\abs{D}$ terms. Once we have
expanded the polynomial, we can use Lemma~\ref{lem:coeff} to compare the
itemsets.

First note that we can rewrite the measure as
\begin{equation}
\label{eq:ndiexpand}
	\sm{X; \pndi{}, \alpha} = \orf{X, \alpha, V} + \orf{X, \alpha, W} - \orf{X, \alpha, U},
\end{equation}
where $U$ consists of all binary vectors of length $\abs{X}$, $V$ is the subset
of $U$ containing vectors having odd number of ones, and $W = U \setminus V$.

Next, we will show how to expand a term $\orf{X, \alpha, S}$ for any set of binary
vectors $S$. Once we are able to do that, we can expand each term in
Eq.~\ref{eq:ndiexpand} individually to compute the final coefficients.
In order to do that, we will use the identity
\[
	(1 - x^a)\sum_{i = 0}^N c_i x^{i} = \sum_{i = 1}^{N + a} (c_i - c_{i - a}) x^i,
\]
where in the right-hand side we define $c_i = 0$ for $i < 0$ or $i > N$. This
gives us a simple iterative procedure, given in Algorithm~\ref{alg:expand}: For
each $v \in S$, we shift the current coefficients by $\supp{X = v}$ and
subtract the result from the current coefficients.

\begin{algorithm}[htb!]
\caption{\textsc{Expand}$(X, D, S)$, expands the polynomial $\orf{X, D, S}$}
\label{alg:expand}
\Input{$X$, an itemset, $D$, a dataset, $S$ a set of vectors}
\Output{$\set{c_i}_1^\abs{D}$ set of coefficients of the polynomial $\orf{X, D, S}$}

$c_i \define 0$ for $i = 0, \ldots, \abs{D}$\;
$c_0 \define 1$\;

\ForEach{$v \in S$} {
	$s \define \supp{X = v}$\;
	$n_i \define 0$ for $i = 0, \ldots, s - 1$\;
	$n_i \define c_{i - s}$ for $i = s, \ldots, \abs{D}$\;
	$c_i \define c_i - n_i$ for $i = 0, \ldots, \abs{D}$\;
}
\Return $\set{c_i}_1^\abs{D}$\;
\end{algorithm}

The highest degree in the polynomial will be $\sum_{v \in S} \supp{X = v}$.
Since, each $v$ is unique in $S$, this number is bounded by $\abs{D}$.  This
means that we have to consider only $\abs{D}$ coefficients and that the
computational complexity of \textsc{Expand} is $O(\abs{S}\abs{D})$.
Consequently, computing the coefficients in Eq.~\ref{eq:ndiexpand} will take
$O(\abs{U}\abs{D}) = O(2^\abs{X}\abs{D})$ time. We can further speed this up by
using sparse vectors, and computing the terms in a lazy fashion during the
comparison.

\begin{example}
Consider itemset $ac$ in our running example.  We have $\supp{ac = (0, 0)} = 3$,
$\supp{ac} = 2$, $\supp{ac = (1, 0)} = 1$, and $\supp{ac = (0, 1)} = 0$.
Let $V = \set{(0, 1), (1, 0)}$, $W= \set{(0, 0), (1, 1)}$ and $U = V \cup W$.
Since $\supp{ac = (0, 1)} = 0$, both $\orf{X, \alpha, V}$ and $\orf{X, \alpha, U}$ are $0$.
We have
\[
	\orf{X, \alpha, W} = (1 - (1 - \alpha)^3)(1 - (1 - \alpha)^2) = 1 - (1 - \alpha)^2 - (1 - \alpha)^3 + (1 - \alpha)^5\quad.
\]
Consequently, \textsc{Expand} will return $(1, 0, -1, -1, 0, 1, 0)$ as coefficients.
\end{example}

\section{Experiments}\label{sec:experiments}
In this section we present our experiments. 
\rt{
\begin{itemize}
\item We study typical behavior of robustness for free, totally shattered, and non-derivable itemsets as a function of $\alpha$.
\item We test how similar the rankings are based on robustness and based on the order $\prec_\sigma$.
\item We test how the ranking of robust closed itemsets changes under the effect of noise.
\end{itemize}
In addition, we provide examples of top-k robust closed and free itemsets.}
\subsection{Datasets}

We used datasets from three repositories. The 8 FIMI~\cite{goethals03fimi} datasets
include large transaction datasets derived from traffic data, census data,
and retail data. Two datasets are synthetically generated to simulate market basket data.
The datasets from the UCI Machine Learning Repository~\cite{asuncion07uci}
represent classification problems from a wide variety of domains.
We used the itemset representations of 29 datasets from the LUCS repository~\cite{coenen03lucs}.
Finally we used 18 text datasets
shipped with the Cluto clustering toolkit~\cite{zhao02evaluation} but converted to itemsets
using a binary representation of words in documents discarding the term frequencies.

\subsection{Reducing the number of patterns}

The goal of the first experiment is to show that this new constraint for itemsets
can significantly reduce the number of itemsets reported in the results by removing
itemsets that are spurious in the sense that they are unlikely to be observed on
many subsamples. Throughout this section we will use
$\alpha$ for the size of the data sample,
$\rho$ for the minimum robustness threshold,
and $\tau$ for the minimum support threshold.

Our first question is how the parameters should be chosen.
It is clear that if we choose $\alpha$ very close to 1,
then even itemsets that would lose their predicate by removing
only a few transactions still have a high likelihood of being found. We would
thus expect most robustness values to be close to 1 when $\alpha$ is close to $1$.
This would make choosing a suitable $\rho$ very difficult
and might lead to problems due to floating point arithmetics.
Similarly, choosing $\alpha$ close to 0 will cause most itemsets to have
a very low likelihood of still being found, thus most robustness values will be close to 0.
Thus choosing a medium $\alpha$ will be most useful to emphasize the quantitative difference between itemsets of various
robustness.

As for the minimum robustness threshold $\rho$, the larger its value is, the stricter the filtering will be. Choosing the threshold is somewhat application dependent but it should not be close to zero, otherwise no reduction will be observed.

To confirm our reasoning we performed a parameter study for the itemset version of the \emph{Zoo} dataset that describes
101 animals with 42 boolean attributes. This data contains $9\,702$ free itemsets, $3\,476$
non-derivable itemsets, and $1\,252$ totally shattered itemsets (at minimum support $\tau = 0.01$).  The number of itemsets
as a function of $\alpha$ and $\rho$ is given in Figure~\ref{fig:zoo_matrix}.
As expected
\begin{itemize}
\item for large $\alpha$ all but the largest $\rho$ do not reduce the number of itemsets reported,
\item as $\alpha$ becomes smaller, the itemsets are spread smoothly across the range of $\rho$ allowing a meaningful quantitative evaluation,
\item for small $\alpha$ almost no itemsets are reported even for very small $\rho$.
\end{itemize}

\begin{figure}[htb]
\centering
\subfigure[free itemsets]{\includegraphics{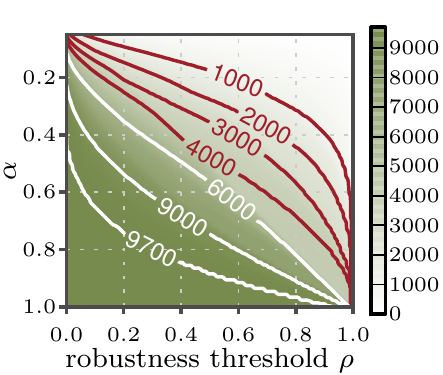}}
\subfigure[non-derivable itemsets]{\includegraphics{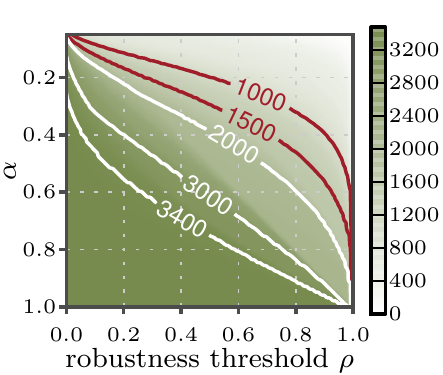}}
\subfigure[totally shattered itemsets]{\includegraphics{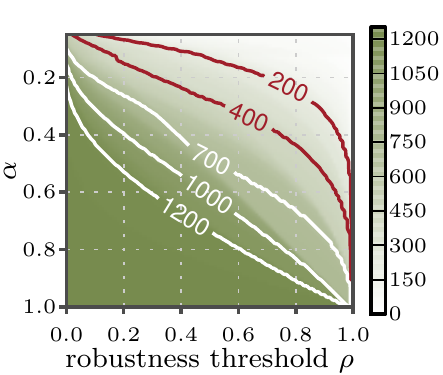}}
\caption{Number of free, non-derivable, and totally shattered itemsets on \emph{Zoo} ($\tau = 0.01$) dataset as a function of $\alpha$ and $\rho$.}
\label{fig:zoo_matrix}
\end{figure}

In order to evaluate if this holds for more datasets, we computed the number of free/non-derivable/totally shattered
using different $\alpha$s and normalized this by the number of robust itemsets
exceeding the minimal robustness threshold of $0.1$. In order to minimize the
variance of behavior of the robustness in a single dataset, we consider an
average over all test datasets, which we give in
Figure~\ref{fig:average}.  We see the same phenomenon as in
Figure~\ref{fig:zoo_matrix}. Large values of $\alpha$ induce a skewed
distribution which becomes more balanced as we decrease the value of $\alpha$.
Consider $\alpha = 0.9$.  Our test datasets typically contain a lot of
itemsets having only one transaction keeping them from becoming non-free. This can be seen
as a dip of the curve for $\alpha = 0.9$ at $\rho = 0.9$ in
Figure~\ref{fig:average:free}.  A second dip at $\rho = 0.81$ represents the
itemsets that can be made non-free by deleting two transactions.
As we make $\alpha$ smaller, these dips become less prominent. 

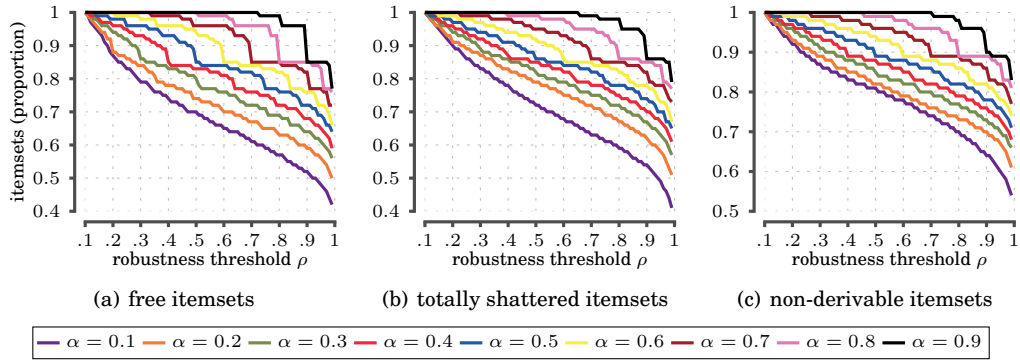
\begin{figure}[htb!]
\begin{center}
\subfigure[free itemsets\label{fig:average:free}]{
\begin{tikzpicture}
\begin{axis}[xlabel={robustness threshold $\rho$}, ylabel= {itemsets (proportion)},
    width = 4.9cm,
    xmax = 1, xmin = 0.1,
	ymax = 1, ymin = 0.4,
	ytick = {0.4, 0.5, ..., 1.1},
	xtick = {0.1, 0.2, ..., 1},
	xticklabels = {$.1$, $.2$, $.3$, $.4$, $.5$, $.6$, $.7$, $.8$, $.9$, $1$},
	no markers,
    cycle list name=yaf,
    ]
\foreach \i in {1, 2, ...,  9} {\addplot table[x index = 0, y expr = {1 - \thisrowno{\i}}] {hists.dat}; }
\pgfplotsextra{\yafdrawaxis{0.1}{1}{0.4}{1}}
\end{axis}
\end{tikzpicture}}
\subfigure[totally shattered itemsets]{
\begin{tikzpicture}
\begin{axis}[xlabel={robustness threshold $\rho$}, 
    width = 4.9cm,
    xmax = 1, xmin = 0.1,
	ymax = 1, ymin = 0.4,
	ytick = {0.4, 0.5, ..., 1.1},
	xtick = {0.1, 0.2, ..., 1},
	xticklabels = {$.1$, $.2$, $.3$, $.4$, $.5$, $.6$, $.7$, $.8$, $.9$, $1$},
	no markers,
    cycle list name=yaf,
    ]
\foreach \i in {10, 11, ...,  18} {\addplot table[x index = 0, y expr = {1 - \thisrowno{\i}}] {hists.dat}; }
\pgfplotsextra{\yafdrawaxis{0.1}{1}{0.4}{1}}
\end{axis}
\end{tikzpicture}}
\subfigure[non-derivable itemsets]{
\begin{tikzpicture}
\begin{axis}[xlabel={robustness threshold $\rho$}, 
    width = 4.9cm,
    xmax = 1, xmin = 0.1,
	ymax = 1, ymin = 0.5,
	ytick = {0.5, 0.6, ..., 1.1},
	xtick = {0.1, 0.2, ..., 1},
	xticklabels = {$.1$, $.2$, $.3$, $.4$, $.5$, $.6$, $.7$, $.8$, $.9$, $1$},
	no markers,
    cycle list name=yaf,
	legend columns = 9,
	legend entries = {$\alpha = 0.1$, $\alpha = 0.2$, $\alpha = 0.3$, $\alpha = 0.4$, $\alpha = 0.5$, $\alpha = 0.6$, $\alpha = 0.7$, $\alpha = 0.8$, $\alpha = 0.9$},
	legend to name = leg:hist
    ]
\foreach \i in {19, 20, ...,  27} {\addplot table[x index = 0, y expr = {1 - \thisrowno{\i}}] {hists.dat}; }
\pgfplotsextra{\yafdrawaxis{0.1}{1}{0.5}{1}}
\end{axis}
\end{tikzpicture}}
\ref{leg:hist}
\end{center}
\caption{Average of the number of free/totally shattered/non-derivable itemsets as a function of $\rho$ normalized by the number of itemsets for $\rho = 0.1$. Average is taken over all test datasets}
\label{fig:average}
\end{figure}

Based on this we chose $\alpha = 0.5$
and plotted the number of free itemsets
as a function of $\rho$.
Figure~\ref{fig:free_zoo_histogram} shows that for the \emph{Zoo} dataset there are many free
itemsets with very different robustness values showing a rich structure that can be exploited
to rank and reduce the number of itemsets. Similar results were observed for many of the UCI datasets.
Figure~\ref{fig:free_la12_histogram} shows a representative example for the text datasets.
While the distribution is much more skewed, a large $\rho$ would
also reduce the number of itemsets by about 50\%.
Finally, Figure~\ref{fig:free_retail_histogram} shows an example for a large transactional dataset
with 88k transactions. Using $\alpha = 0.5$ generated a distribution where all values were close to one
so we needed to set $\alpha = 0.01$ to better show the quantitative differences of the itemsets.
This demonstrates that the more transactions a dataset contains, the more skewed the distribution for a
fixed $\alpha$ will be.

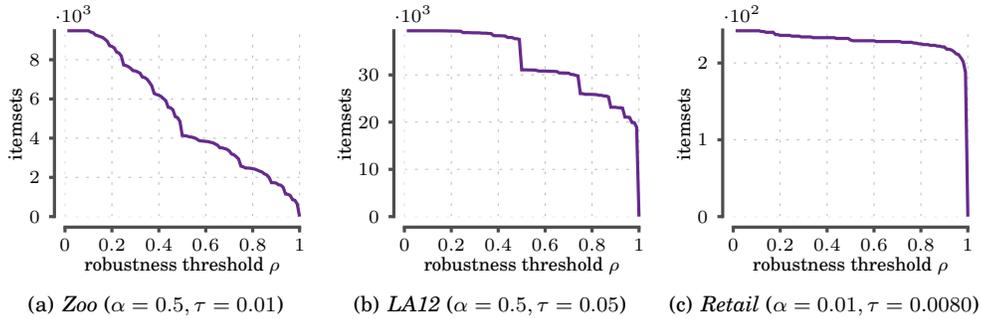
\begin{figure}[htb!]
\centering
\subfigure[\emph{Zoo} ($\alpha=0.5, \tau = 0.01$)\label{fig:free_zoo_histogram}]{
\begin{tikzpicture}
\begin{axis}[xlabel={robustness threshold $\rho$}, ylabel= {itemsets},
    width = 4.7cm,
    xmax = 1, xmin = 0,
	no markers,
    scaled x ticks = false,
    scaled y ticks = base 10:-3,
    cycle list name=yaf,
    ]
\addplot table[x index = 0, y expr = {9479 - \thisrowno{1}}, header = false] {zoo_hist.dat};
\pgfplotsextra{\yafdrawaxis{0}{1}{0}{9479}}
\end{axis}
\end{tikzpicture}}
\subfigure[\emph{LA12} ($\alpha=0.5, \tau = 0.05$)\label{fig:free_la12_histogram}]{
\begin{tikzpicture}
\begin{axis}[xlabel={robustness threshold $\rho$}, ylabel= {itemsets},
    width = 4.7cm,
    xmax = 1, xmin = 0,
	no markers,
    scaled x ticks = false,
    scaled y ticks = base 10:-3,
    cycle list name=yaf,
    ]
\addplot table[x index = 0, y expr = {39350 - \thisrowno{1}}, header = false] {la12_hist.dat};
\pgfplotsextra{\yafdrawaxis{0}{1}{0}{39350}}
\end{axis}
\end{tikzpicture}}
\subfigure[\emph{Retail} ($\alpha=0.01, \tau = 0.0080$)\label{fig:free_retail_histogram}]{
\begin{tikzpicture}
\begin{axis}[xlabel={robustness threshold $\rho$}, ylabel= {itemsets},
    width = 4.7cm,
    xmax = 1, xmin = 0, ymin=0,
	no markers,
    scaled x ticks = false,
    scaled y ticks = base 10:-2,
    cycle list name=yaf,
    ]
\addplot table[x index = 0, y expr = {243 - \thisrowno{1}}, header = false] {retail_hist.dat};
\pgfplotsextra{\yafdrawaxis{0}{1}{0}{243}}
\end{axis}
\end{tikzpicture}}
\caption{Number of free itemsets as a function of $\rho$}
\label{fig:histogram}
\end{figure}

\subsection{Effect of noise for robust closed itemsets}
\rt{
Our next experiment is to see how robust closed itemsets behave when a dataset is
exposed to noise.  Our expectation is that most robust itemsets will stay
closed and be ranked higher while the ranking of the less robust itemsets will
be more susceptible to noise.

In order to do this, we created from each dataset a synthetic dataset having
the same dimensions by sampling from a distribution. The underlying distribution 
had the same margins as the original data but otherwise items were independent.
We then mix the original data with the synthetic one, that is, an entry in a mixed
dataset is an entry from the synthetic dataset with the probability $\eta$, and
is an entry from the original dataset with the probability $1 - \eta$.
We tested two different noise levels $\eta = 0.05$ and $\eta = 0.1$.

We mined approximately $10\,000$ frequent closed itemsets from each original dataset. If the
dataset contained less than $10\,000$ itemsets, we set the threshold to one
transaction. Using the same thresholds we mined closed itemsets from the mixed
datasets. We sorted the itemsets using Algorithm~\ref{alg:closedestimate}. 

Let $X$ be an itemset ranked $i$th in the original data.  Assume that $X$ is ranked $j$th in the noisy data. We define compliance of $X$ by $1 / (\abs{i -
j} + 1)$. The compliance will be $1$ if $i = j$ and decreases to $0$ the
longer is the distance. The reason for using this particular definition is that
we can naturally set compliance to $0$ if $X$ is not found in the noisy data.
The compliances for top-$100$ itemsets are given in Figure~\ref{fig:clrank}. }

\begin{figure}[htb!]
\begin{center}
\subfigure[noise $\eta = 0.05$]{
\begin{tikzpicture}
\begin{axis}[xlabel={original itemset rank}, ylabel= {compliance},
    width = 6cm,
    xmax = 100, xmin = 1,
	ymax = 1, ymin = 0,
	no markers,
    cycle list name=yaf,
	xtick = {1, 21, 41, 61, 81, 100},
	ytick = {0, 0.1, ..., 1.1}
    ]
\addplot table[x expr = {\coordindex + 1}, y index = 0] {closed_ranks_0.05.dat};
\addplot table[x expr = {\coordindex + 1}, y index = 1] {closed_ranks_0.05.dat};
\addplot table[x expr = {\coordindex + 1}, y index = 2] {closed_ranks_0.05.dat};
\pgfplotsextra{\yafdrawaxis{1}{100}{0}{1}}
\legend{25\%, 50\%, 75\%}
\end{axis}
\end{tikzpicture}}
\subfigure[noise $\eta = 0.1$]{
\begin{tikzpicture}
\begin{axis}[xlabel={original itemset rank}, ylabel= {compliance},
    width = 6cm,
    xmax = 100, xmin = 1,
	ymax = 1, ymin = 0,
	no markers,
    cycle list name=yaf,
	xtick = {1, 21, 41, 61, 81, 100},
	ytick = {0, 0.1, ..., 1.1}
    ]
\addplot table[x expr = {\coordindex + 1}, y index = 0] {closed_ranks_0.1.dat};
\addplot table[x expr = {\coordindex + 1}, y index = 1] {closed_ranks_0.1.dat};
\addplot table[x expr = {\coordindex + 1}, y index = 2] {closed_ranks_0.1.dat};
\pgfplotsextra{\yafdrawaxis{1}{100}{0}{1}}
\legend{25\%, 50\%, 75\%}
\end{axis}
\end{tikzpicture}}

\end{center}
\caption{Rank compliance of an itemset in a noisy data as a function of robustness in the original data.
High compliance value imply that adding noise had little effect on the rank of
an itemset. Median and quartiles are computed over all datasets.}
\label{fig:clrank}
\end{figure}
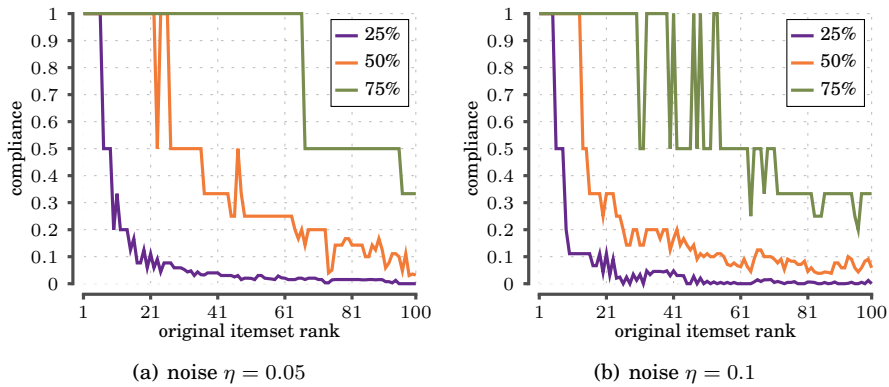

\rt{
From the figures we see that compliance stay high for robust itemsets and drop
as we move further down the original ranking. That is, the more robust
an itemset is, the less prone to noise it is. Adding more noise to the data implies
less compliance. For example, for noise level $\eta = 0.05$, top-60 itemsets
had a compliance of 0.25 or higher in half of the datasets. This means that their
rank changed only by $3$. On the other hand for noise level $\eta = 0.1$, top-50
itemsets had a compliance of 0.1 or higher in half of the dataset, in other
words, ranks changed by $9$.
}

\subsection{Ranking without $\alpha$}
Our next experiment was to compare the parameter-free ranking described in
Section~\ref{sec:order} against the rankings based on quantitative robustness
given specific values of $\alpha$.  We expect that rankings are similar for
large $\alpha$ values and difference increase when we lower $\alpha$. For
comparison we used the number of discordant pairs to calculate a distance of the rankings similar to Kendall's $\tau$. 
A discordant pair is a pair of itemsets $(X, Y)$ such that the first
method ranks $X$ higher than $Y$ and the second method ranks $Y$ higher than
$X$. We normalize the number of observed discordant pairs by $100b$, where $b$ is the  maximum number
of discordant pairs.
Hence, we  obtain a value between $0$ and $100$. If there are no ties in robustness, then 
$ b= N(N - 1) / 2$, where $N$ is the number of itemsets.
However if ties are presented, that is, the robustness induces a bucket order, then $b = N(N - 1) /2 -
\sum_{i = 1} B_i (B_i - 1) / 2 $, where $B_i$ is the size of each bucket, set of itemsets having the same robustness.
Values close to $0$ mean that rankings are in agreement. 

Typical examples are given in Table~\ref{tab:ranking} for the \emph{Mushroom} and \emph{Zoo} datasets, along with the 
averages taken over all datasets.
Surprisingly, the ranking distance is extremely small even for small values of $\alpha$ showing that the parameter free approach produces rankings similar to rankings under most $\alpha$. Starting at $\alpha = 0.5$ for \emph{Mushroom} and all $\alpha$ for \emph{Zoo} only about 1\% of pairs are discordant.
We see that values increase as we lower $\alpha$ which is expected since the parameter-free approach is based on large $\alpha$ values.

\begin{table}[htb!]
\tbl{Distance between parameter-free rankings and rankings based on $\alpha$ for \emph{Mushroom} and \emph{Zoo} datasets. Low values imply that rankings agree.
Value range is $0$--$100$.}{
\begin{tabular}{lrrrlrrrlrrr}
\toprule
& \multicolumn{3}{l}{\emph{Mushroom} $(\tau = 0.05)$} && \multicolumn{3}{l}{\emph{Zoo} $(\tau = 0.01)$} && \multicolumn{3}{l}{All datasets}\\
\cmidrule{2-4}
\cmidrule{6-8}
\cmidrule{10-12}
$\alpha$ & free & ts & nd &&  free & ts & nd &&  free & ts & nd \\
\midrule

$0.1$ & $0.30$ & $0.82$ & $0.39$ &  & $14.91$ & $7.26$ & $7.62$ &  & $4.35$ & $4.25$ & $2.80$\\
$0.2$ & $0.078$ & $0.27$ & $0.089$ &  & $10.01$ & $8.69$ & $4.31$ &  & $2.59$ & $2.67$ & $2.11$\\
$0.3$ & $0.017$ & $0.11$ & $0.044$ &  & $5.94$ & $5.40$ & $2.23$ &  & $1.46$ & $1.63$ & $1.45$\\
$0.4$ & $0.0016$ & $0.050$ & $0.015$ &  & $2.84$ & $2.77$ & $1.84$ &  & $0.69$ & $1.03$ & $0.91$\\
$0.5$ & $0.000032$ & $0.027$ & $0.0022$ &  & $1.12$ & $1.31$ & $1.11$ &  & $0.26$ & $0.56$ & $0.52$\\
$0.6$ & $0.0000050$ & $0.016$ & $0.0023$ &  & $0.32$ & $0.58$ & $0.56$ &  & $0.082$ & $0.25$ & $0.27$\\
$0.7$ & $0$ & $0.0020$ & $0.0011$ &  & $0.017$ & $0.20$ & $0.25$ &  & $0.013$ & $0.073$ & $0.11$\\
$0.8$ & $0$ & $0.0022$ & $0$ &  & $0$ & $0$ & $0.013$ &  & $0.000074$ & $0.013$ & $0.031$\\
$0.9$ & $0$ & $0$ & $0$ &  & $0$ & $0$ & $0$ &  & $0$ & $0.00049$ & $0.0039$\\

\bottomrule
\end{tabular}}
\label{tab:ranking}
\end{table}

\subsection{Top-k closed and free itemsets}

Closed itemsets are often used for tasks requiring interpretation of the itemsets,
because as maximum elements of an equivalence class they offer the most detailed description.
We
studied the highest ranked closed
itemsets for text datasets that are easily understood without domain
knowledge. As an illustrative example, we used the \emph{re0} news dataset from which we
mine $2493$ closed itemsets with minimum support $\tau = 0.05$. We ordered these itemsets
using the estimation technique given in Section~\ref{sec:orderclosed} and list the
top 45 itemsets in Table~\ref{tab:closed}.  The ranking is different from the one using
support, less frequent (but more robust) itemsets are commonly ranked higher that
frequent itemsets. For example, 'bank pct rate' occurs before the much more frequent itemset
'bank pct' showing that 'bank pct' is only closed in the full dataset due to relatively few 
documents using it without also using 'rate'.

\begin{table}[htb!]
\tbl{Top-45 closed itemsets from \emph{re0} ($\tau = 0.05$) dataset.}{
\begin{tabular*}{\columnwidth}{@{}r@{\extracolsep{\fill}}rr@{ }rrr@{ }rrr@{}}
\toprule
1. &  pct & 792 &
16. &  week & 310 &
31. &  canada & 117 \\
2. &  bank & 702 &
17. &  pct earlier & 127 &
32. &  pct month & 261 \\
3. &  trade & 485 &
18. &  japan & 318 &
33. &  econom & 295 \\
4. &  billion & 552 &
19. &  trade current & 126 &
34. &  billion dlr mln & 116 \\
5. &  market & 554 &
20. &  dlr & 472 &
35. &  told bank & 116 \\
6. &  billion dlr & 346 &
21. &  bank pct rate & 287 &
36. &  told nation & 116 \\
7. &  offici & 342 &
22. &  dollar & 336 &
37. &  pct japan & 115 \\
8. &  mln & 420 &
23. &  statem & 122 &
38. &  pct adjust & 115 \\
9. &  nation & 323 &
24. &  committe & 121 &
39. &  billion current & 115 \\
10. &  rate & 566 &
25. &  nation month & 121 &
40. &  european & 114 \\
11. &  bank market & 369 &
26. &  ministri & 120 &
41. &  month japan & 114 \\
12. &  foreign & 331 &
27. &  pct rise & 269 &
42. &  bank ad market & 114 \\
13. &  pct figur & 132 &
28. &  bank pct & 407 &
43. &  action & 114 \\
14. &  pct rate & 418 &
29. &  pct rate feb & 119 &
44. &  trade world & 114 \\
15. &  month & 391 &
30. &  lead & 118 &
45. &  nation japan & 114 \\
\bottomrule
\end{tabular*}}
\label{tab:closed}
\end{table}

Finally, we considered an alternative order by ranking itemsets based on how
free they are.  Note that a closed itemset is robust if the same transactions
cannot be explained by a superset whereas a free itemset is robust if the same
transactions cannot be explained by a subset. For example, a singleton $X$ will
be ranked higher than singleton $Y$ if $X$ has \emph{lower} support. The reason
for this is that it requires less transactions to be removed in order to make
$Y$ non-robust, namely the transactions not containing $Y$.  We present the
top-45 free non-singleton itemsets from \emph{re0} news dataset in
Table~\ref{tab:free}. \rt{These are frequent item pairs $ab$ such that $\supp{ab} \ll \supp{a}$
and $\supp{ab} \ll \supp{b}$, that is, a non-robust free item pair $ab$ would be such
that if we would remove a singleton $a$ (or $b$), then roughly the same transactions
will still cover the pattern.}
An example of
such non-robust free itemset is \emph{bank assist}. This itemset is ranked as 2\,465 out of
2\,558 itemsets. The support of this itemset is 96 but the support of \emph{assist} is
98, consequently there are only two documents in which \emph{assist} occurs but not
\emph{bank}.

\begin{table}[htb!]
\tbl{Top-45 free non-singleton itemsets from \emph{re0} ($\tau = 0.05$) dataset.}{
\begin{tabular*}{\columnwidth}{@{}r@{\extracolsep{\fill}}rr@{ }rrr@{ }rrr@{}}
\toprule
$1.$ & billion rate & $165$ & $16.$ & bank billion & $287$ & $31.$ & govern dollar & $92$\\
$2.$ & rate dlr & $132$ & $17.$ & billion pct & $288$ & $32.$ & foreign februari & $87$\\
$3.$ & trade rate & $146$ & $18.$ & rise dlr & $118$ & $33.$ & januari dlr & $81$\\
$4.$ & trade bank & $154$ & $19.$ & pct dlr & $210$ & $34.$ & monei dlr & $81$\\
$5.$ & billion market & $228$ & $20.$ & trade billion & $223$ & $35.$ & dollar februari & $89$\\
$6.$ & rate mln & $109$ & $21.$ & bank dlr & $211$ & $36.$ & rise japan & $78$\\
$7.$ & trade pct & $176$ & $22.$ & govern februari & $77$ & $37.$ & februari japan & $78$\\
$8.$ & bank pct & $407$ & $23.$ & govern mln & $90$ & $38.$ & dollar offici & $96$\\
$9.$ & market rate & $262$ & $24.$ & month dlr & $141$ & $39.$ & rise offici & $104$\\
$10.$ & trade mln & $130$ & $25.$ & trade dlr & $222$ & $40.$ & pct mln & $184$\\
$11.$ & market dlr & $186$ & $26.$ & pct market & $306$ & $41.$ & februari dlr & $94$\\
$12.$ & month mln & $106$ & $27.$ & market rise & $133$ & $42.$ & foreign mln & $95$\\
$13.$ & trade market & $203$ & $28.$ & februari offici & $83$ & $43.$ & nation februari & $89$\\
$14.$ & rise mln & $109$ & $29.$ & govern monei & $77$ & $44.$ & januari mln & $90$\\
$15.$ & trade rise & $115$ & $30.$ & januari govern & $77$ & $45.$ & monei month & $90$\\
\bottomrule
\end{tabular*}}
\label{tab:free}
\end{table}

\section{Discussion}\label{sec:discussion}
The experiments have shown that the number of itemsets can be largely reduced
on many datasets when requiring a certain robustness. The fact that the results
vary by dataset are another indication of the well known fact that itemset data
with different structures (dense vs. sparse, many items vs. many transactions)
behave very differently in mining tasks. 

We believe that robust itemsets can be beneficial for post-processing techniques such as \cite{bringmann09one} or \cite{vreeken11krimp} that use itemsets as their input and remove redundancy in the pattern set. Robust itemsets can be used as an alternative input reducing their runtime without sacrificing performance. Also, robust itemsets could be used instead of closed-itemsets as seeds to the AC-Close algorithm for approximate itemset mining \cite{cheng06acclose} improving its efficiency that was criticized in \cite{gupta08quantitative}.

The ranking of itemsets by robustness presents a new interestingness measure that can be used to choose the top-$k$ itemsets for interpretation or other data mining tasks. The intuition of robustness should be easy to understand for analysts but which ranking is better for specific data mining tasks remains to be studied. 

In particular it will be interesting to evaluate performance as features for classification tasks in contrast to direct mining of prediction tasks. For interpretable classifiers one would want itemsets to be long, thus use closed patterns. On the other hand the desire is for an itemset to be present in unseen data with high likelihood, so free itemset as the minimal elements of an equivalence class may generalize better. For both patterns we can ensure that they are present in many subsets of the training without actually sampling, potentially alleviating the need for nested cross validation.

\section{Summary}\label{sec:summary}
We have shown how robustness under subsampling for common classes of itemsets can be 
computed efficiently without actually sampling the data. The experimental results
show that the number of reported itemsets can be largely reduced on many datasets, 
in other words spurious itemsets that would not have been found in many subsets
of the data are removed. The approach can further be used to rank itemsets for 
top-$k$ mining by robustness. Future work will investigate the effect of using robust itemsets 
on data mining tasks such as clustering, classification, and rule generation using 
itemsets.

\bibliographystyle{acmsmall}
\bibliography{subsample}

\begin{thebibliography}{}

\bibitem[\protect\citeauthoryear{Agrawal, Imielinski, and Swami}{Agrawal
  et~al\mbox{.}}{1993}]{agrawal93mining}
{\sc Agrawal, R.}, {\sc Imielinski, T.}, {\sc and} {\sc Swami, A.~N.} 1993.
\newblock Mining association rules between sets of items in large databases.
\newblock In {\em SIGMOD}. 207--216.

\bibitem[\protect\citeauthoryear{Asuncion and Newman}{Asuncion and
  Newman}{2007}]{asuncion07uci}
{\sc Asuncion, A.} {\sc and} {\sc Newman, D.} 2007.
\newblock {UCI} machine learning repository.

\bibitem[\protect\citeauthoryear{Boulicaut, Bykowski, and Rigotti}{Boulicaut
  et~al\mbox{.}}{2000}]{boulicaut00approximationof}
{\sc Boulicaut, J.-F.}, {\sc Bykowski, A.}, {\sc and} {\sc Rigotti, C.} 2000.
\newblock Approximation of frequency queries by means of free-sets.
\newblock In {\em PKDD}. 75--85.

\bibitem[\protect\citeauthoryear{Boulicaut, Bykowski, and Rigotti}{Boulicaut
  et~al\mbox{.}}{2003}]{boulicaut03freesets}
{\sc Boulicaut, J.-F.}, {\sc Bykowski, A.}, {\sc and} {\sc Rigotti, C.} 2003.
\newblock Free-sets: {A} condensed representation of boolean data for the
  approximation of frequency queries.
\newblock {\em DMKD\/}~{\em 7,\/}~1, 5--22.

\bibitem[\protect\citeauthoryear{Brin, Motwani, and Silverstein}{Brin
  et~al\mbox{.}}{1997}]{brin:97:beyond}
{\sc Brin, S.}, {\sc Motwani, R.}, {\sc and} {\sc Silverstein, C.} 1997.
\newblock Beyond market baskets: Generalizing association rules to
  correlations.
\newblock In {\em SIGMOD}. 265--276.

\bibitem[\protect\citeauthoryear{Bringmann and Zimmermann}{Bringmann and
  Zimmermann}{2009}]{bringmann09one}
{\sc Bringmann, B.} {\sc and} {\sc Zimmermann, A.} 2009.
\newblock One in a million: picking the right patterns.
\newblock {\em KAIS\/}~{\em 18,\/}~1, 61--81.

\bibitem[\protect\citeauthoryear{Calders and Goethals}{Calders and
  Goethals}{2007}]{calders06nonderivable}
{\sc Calders, T.} {\sc and} {\sc Goethals, B.} 2007.
\newblock Non-derivable itemset mining.
\newblock {\em DMKD\/}~{\em 14,\/}~1, 171--206.

\bibitem[\protect\citeauthoryear{Calders, Goethals, and Mampaey}{Calders
  et~al\mbox{.}}{2007}]{calders04mining}
{\sc Calders, T.}, {\sc Goethals, B.}, {\sc and} {\sc Mampaey, M.} 2007.
\newblock Mining itemsets in the presence of missing values.
\newblock In {\em SAC}. 404--408.

\bibitem[\protect\citeauthoryear{Calders, Rigotti, and Boulicaut}{Calders
  et~al\mbox{.}}{2006}]{calders06survey}
{\sc Calders, T.}, {\sc Rigotti, C.}, {\sc and} {\sc Boulicaut, J.-F.} 2006.
\newblock A survey on condensed representations for frequent sets.
\newblock In {\em Constraint-Based Mining and Inductive Databases}. 64--80.

\bibitem[\protect\citeauthoryear{Cheng, Yan, Han, and Hsu}{Cheng
  et~al\mbox{.}}{2007}]{cheng07discriminative}
{\sc Cheng, H.}, {\sc Yan, X.}, {\sc Han, J.}, {\sc and} {\sc Hsu, C.} 2007.
\newblock Discriminative frequent pattern analysis for effective
  classification.
\newblock In {\em ICDE}. 716--725.

\bibitem[\protect\citeauthoryear{Cheng, Yu, and Han}{Cheng
  et~al\mbox{.}}{2006}]{cheng06acclose}
{\sc Cheng, H.}, {\sc Yu, P.~S.}, {\sc and} {\sc Han, J.} 2006.
\newblock {AC-Close}: Efficiently mining approximate closed itemsets by core
  pattern recovery.
\newblock In {\em ICDM}. IEEE, 839--844.

\bibitem[\protect\citeauthoryear{Coenen}{Coenen}{2003}]{coenen03lucs}
{\sc Coenen, F.} 2003.
\newblock The {LUCS-KDD} discretised/normalised {ARM} and {CARM} data library.

\bibitem[\protect\citeauthoryear{De~Bie}{De~Bie}{2011}]{debie:11:dami}
{\sc De~Bie, T.} 2011.
\newblock Maximum entropy models and subjective interestingness: an application
  to tiles in binary databases.
\newblock 1--40.

\bibitem[\protect\citeauthoryear{Gallo, De~Bie, and Cristianini}{Gallo
  et~al\mbox{.}}{2007}]{gallo07mini}
{\sc Gallo, A.}, {\sc De~Bie, T.}, {\sc and} {\sc Cristianini, N.} 2007.
\newblock Mini: Mining informative non-redundant itemsets.
\newblock In {\em ECMLPKDD}. 438--445.

\bibitem[\protect\citeauthoryear{Geerts, Goethals, and Mielik\"{a}inen}{Geerts
  et~al\mbox{.}}{2004}]{geerts04tiling}
{\sc Geerts, F.}, {\sc Goethals, B.}, {\sc and} {\sc Mielik\"{a}inen, T.} 2004.
\newblock Tiling databases.
\newblock In {\em Proc. Discovery Science}. 278--289.

\bibitem[\protect\citeauthoryear{Ghoshal and Barab\'{a}si}{Ghoshal and
  Barab\'{a}si}{2011}]{ghoshai:11:ranking}
{\sc Ghoshal, G.} {\sc and} {\sc Barab\'{a}si, A.-L.} 2011.
\newblock Ranking stability and super-stable nodes in complex networks.
\newblock {\em Nature Communications\/}~{\em 2}.

\bibitem[\protect\citeauthoryear{Gionis, Mannila, Mielik{\"a}inen, and
  Tsaparas}{Gionis et~al\mbox{.}}{2007}]{gionis:07:assessing}
{\sc Gionis, A.}, {\sc Mannila, H.}, {\sc Mielik{\"a}inen, T.}, {\sc and} {\sc
  Tsaparas, P.} 2007.
\newblock Assessing data mining results via swap randomization.
\newblock {\em TKDD\/}~{\em 1,\/}~3.

\bibitem[\protect\citeauthoryear{Goethals and Zaki}{Goethals and
  Zaki}{2003}]{goethals03fimi}
{\sc Goethals, B.} {\sc and} {\sc Zaki, M.} 2003.
\newblock {FIMI '03}, frequent itemset mining implementations.
\newblock In {\em ICDM 2003 Workshop, FIMI}.

\bibitem[\protect\citeauthoryear{Gupta, Fang, Field, Steinbach, and
  Kumar}{Gupta et~al\mbox{.}}{2008}]{gupta08quantitative}
{\sc Gupta, R.}, {\sc Fang, G.}, {\sc Field, B.}, {\sc Steinbach, M.}, {\sc
  and} {\sc Kumar, V.} 2008.
\newblock Quantitative evaluation of approximate frequent pattern mining
  algorithms.
\newblock In {\em KDD}. 301--309.

\bibitem[\protect\citeauthoryear{Hanhij{\"a}rvi, Ojala, Vuokko, Puolam{\"a}ki,
  Tatti, and Mannila}{Hanhij{\"a}rvi et~al\mbox{.}}{2009}]{hanhijarvi:09:tell}
{\sc Hanhij{\"a}rvi, S.}, {\sc Ojala, M.}, {\sc Vuokko, N.}, {\sc
  Puolam{\"a}ki, K.}, {\sc Tatti, N.}, {\sc and} {\sc Mannila, H.} 2009.
\newblock Tell me something {I} don't know: randomization strategies for
  iterative data mining.
\newblock In {\em Proceedings of the 15th ACM SIGKDD International Conference
  on Knowledge Discovery and Data Mining (KDD 2009)}. 379--388.

\bibitem[\protect\citeauthoryear{Hipp, G{\"u}ntzer, and Nakhaeizadeh}{Hipp
  et~al\mbox{.}}{2000}]{hipp00algorithms}
{\sc Hipp, J.}, {\sc G{\"u}ntzer, U.}, {\sc and} {\sc Nakhaeizadeh, G.} 2000.
\newblock Algorithms for association rule mining - a general survey and
  comparison.
\newblock {\em SIGKDD Explorations\/}~{\em 2,\/}~1, 58--64.

\bibitem[\protect\citeauthoryear{Luccese, Orlando, and Perego}{Luccese
  et~al\mbox{.}}{2010}]{luccese10mining}
{\sc Luccese, C.}, {\sc Orlando, S.}, {\sc and} {\sc Perego, R.Casas-Garriga,
  G.} 2010.
\newblock Mining top-k patterns from binary datasets in presence of noise.
\newblock In {\em ICDM}.

\bibitem[\protect\citeauthoryear{Mielik{\"a}inen}{Mielik{\"a}inen}{2005}]{mielikainen05databases}
{\sc Mielik{\"a}inen, T.} 2005.
\newblock Transaction databases, frequent itemsets, and their condensed
  representations.
\newblock In {\em KDID}. 139--164.

\bibitem[\protect\citeauthoryear{Misra, Golshan, and Terzi}{Misra
  et~al\mbox{.}}{2012}]{DBLP:conf/pkdd/MisraGT12}
{\sc Misra, G.}, {\sc Golshan, B.}, {\sc and} {\sc Terzi, E.} 2012.
\newblock A framework for evaluating the smoothness of data-mining results.
\newblock In {\em ECMLPKDD 2012}. 660--675.

\bibitem[\protect\citeauthoryear{Moerchen, Thies, and Ultsch}{Moerchen
  et~al\mbox{.}}{2010}]{moerchen10efficient}
{\sc Moerchen, F.}, {\sc Thies, M.}, {\sc and} {\sc Ultsch, A.} 2010.
\newblock Efficient mining of all margin-closed itemsets with applications in
  temporal knowledge discovery and classification by compression.
\newblock {\em KAIS\/}.

\bibitem[\protect\citeauthoryear{Pasquier, Bastide, Taouil, and
  Lakhal}{Pasquier et~al\mbox{.}}{1999}]{pasquier99discovering}
{\sc Pasquier, N.}, {\sc Bastide, Y.}, {\sc Taouil, R.}, {\sc and} {\sc Lakhal,
  L.} 1999.
\newblock Discovering frequent closed itemsets for association rules.
\newblock In {\em ICDT}. 398--416.

\bibitem[\protect\citeauthoryear{Pei, Han, and Lakshmanan}{Pei
  et~al\mbox{.}}{2001}]{pei01mining}
{\sc Pei, J.}, {\sc Han, J.}, {\sc and} {\sc Lakshmanan, L. V.~S.} 2001.
\newblock Mining frequent itemsets with convertible constraints.
\newblock In {\em ICDE}. 433--442.

\bibitem[\protect\citeauthoryear{Smets and Vreeken}{Smets and
  Vreeken}{2011}]{vreeken11odd}
{\sc Smets, K.} {\sc and} {\sc Vreeken, J.} 2011.
\newblock The odd one out: Identifying and characterising anomalies.
\newblock In {\em SDM}.

\bibitem[\protect\citeauthoryear{Tatti}{Tatti}{2008}]{tatti08maximum}
{\sc Tatti, N.} 2008.
\newblock Maximum entropy based significance of itemsets.
\newblock {\em KAIS\/}~{\em 17,\/}~1, 57--77.

\bibitem[\protect\citeauthoryear{Tatti and Moerchen}{Tatti and
  Moerchen}{2011}]{tatti:11:finding}
{\sc Tatti, N.} {\sc and} {\sc Moerchen, F.} 2011.
\newblock Finding robust itemsets under subsampling.
\newblock In {\em 11th IEEE International Conference on Data Mining, ICDM
  2011}. 705--714.

\bibitem[\protect\citeauthoryear{Uno and Arimura}{Uno and
  Arimura}{2007}]{uno07efficient}
{\sc Uno, T.} {\sc and} {\sc Arimura, H.} 2007.
\newblock An efficient polynomial delay algorithm for pseudo frequent itemset
  mining.
\newblock In {\em Discovery Science}. Springer, 219--230.

\bibitem[\protect\citeauthoryear{Vreeken, van Leeuwen, and Siebes}{Vreeken
  et~al\mbox{.}}{2011}]{vreeken11krimp}
{\sc Vreeken, J.}, {\sc van Leeuwen, M.}, {\sc and} {\sc Siebes, A.} 2011.
\newblock Krimp: mining itemsets that compress.
\newblock {\em DMKD\/}~{\em 23,\/}~1, 169--214.

\bibitem[\protect\citeauthoryear{Wang, Xu, and Liu}{Wang
  et~al\mbox{.}}{1999}]{wang99clustering}
{\sc Wang, K.}, {\sc Xu, C.}, {\sc and} {\sc Liu, B.} 1999.
\newblock Clustering transactions using large items.
\newblock In {\em {CIKM}}. 483--490.

\bibitem[\protect\citeauthoryear{Webb}{Webb}{2007}]{webb07discovering}
{\sc Webb, G.~I.} 2007.
\newblock Discovering significant patterns.
\newblock {\em Mach. Learn.\/}~{\em 68,\/}~1, 1--33.

\bibitem[\protect\citeauthoryear{Xin, Han, Yan, and Cheng}{Xin
  et~al\mbox{.}}{2005}]{xin05mining}
{\sc Xin, D.}, {\sc Han, J.}, {\sc Yan, X.}, {\sc and} {\sc Cheng, H.} 2005.
\newblock Mining compressed frequent-pattern sets.
\newblock In {\em VLDB}. 709--720.

\bibitem[\protect\citeauthoryear{Zhao and Karypis}{Zhao and
  Karypis}{2002}]{zhao02evaluation}
{\sc Zhao, Y.} {\sc and} {\sc Karypis, G.} 2002.
\newblock Evaluation of hierarchical clustering algorithms for document
  datasets.
\newblock In {\em CIKM}. 515--524.

\end{thebibliography}

\end{document}